\documentclass{article}

\usepackage{arxiv}

\usepackage[utf8]{inputenc} % allow utf-8 input
\usepackage[T1]{fontenc}    % use 8-bit T1 fonts
\usepackage[hidelinks]{hyperref}       % hyperlinks
\usepackage{url}            % simple URL typesetting
\usepackage{booktabs}       % professional-quality tables
\usepackage{amsfonts}       % blackboard math symbols
\usepackage{nicefrac}       % compact symbols for 1/2, etc.
\usepackage{microtype}      % microtypography
\usepackage{graphicx}
\usepackage{natbib}
\usepackage{doi}
\usepackage{balance} % for balancing columns on the final page
\usepackage{multirow}
\usepackage{amsthm}
\usepackage{amsmath}
\usepackage{enumitem}
\usepackage{xcolor}
\usepackage[english]{babel}

\usepackage{times}
\usepackage{soul}
\usepackage{caption}
\usepackage{amsthm}
\usepackage{array}
\usepackage[switch]{lineno}

\usepackage{acronym}
\usepackage{bbm}
\usepackage{multicol}
\usepackage{subcaption}

\usepackage{amsmath}       % for math environments and symbols
\usepackage{amssymb}       % for more math symbols
\usepackage{algorithm}
\usepackage{algpseudocode} 
\usepackage{bm}            % for bold math symbols
\usepackage{graphicx}      % for figures, if needed
\usepackage{hyperref}      % for clickable refs (optional)
\usepackage{caption}       % for caption 
\usepackage[normalem]{ulem}
\usepackage[most]{tcolorbox}
\algrenewcommand\algorithmiccomment[1]{\hfill \textcolor{gray}{// #1}}

\renewcommand{\paragraph}[1]{\noindent\textbf{#1}}

\newcommand{\ourrealgo}{\textsc{ROSA}}

\newcommand{\ourmabalgo}{\textsc{TRCM-UCB$_\texttt{OPT}$}}
\newcommand{\ourlinucb}{\textsc{REV-BaseLinUCB-S}}
\newcommand{\oursupucb}{\textsc{REV-SupLinUCB-S-OPT}}

\theoremstyle{mythmstyle}
\newtheorem{theorem}{Theorem}
\newtheorem{proposition}{Proposition}
\newtheorem{claim}{Claim}

\newtheorem{definition}{Definition}
\newtheorem{lemma}{Lemma}
\newtheorem{corollary}{Corollary}

\algnewcommand{\Input}{\textbf{Input: }}
\algnewcommand{\Output}{\textbf{Output: }}
\algnewcommand{\Parameter}{\textbf{Parameter: }}
\algnewcommand{\Wp}{\textbf{with probability }}
\algnewcommand{\Elsecon}{\textbf{else }}

\tcbset{
  mechbox/.style={
    colback=gray!5,      % very light grey background
    colframe=gray!40,    % light grey border
    fonttitle=\bfseries,
    coltitle=black,
    boxrule=0.4pt,
    arc=2pt,
    outer arc=2pt,
    left=6pt,
    right=6pt,
    top=4pt,
    bottom=4pt,
    boxsep=2pt,
  }
}

\title{Truthful Reverse Auctions for Adaptive Selection via Contextual Multi-Armed Bandits%
\thanks{Appears as a Full Paper in the Proceedings of the 25th International Conference on Autonomous Agents and Multiagent Systems (AAMAS 2026).}}

\author{ {\hspace{1mm}Pronoy Patra} \\
	International Institute of Information \\
        Technology (IIIT), Hyderabad, India \\
	\texttt{pronoy.patra@research.iiit.ac.in} \\
	\And
	{\hspace{1mm}Sankarshan Damle} \\
	Microsoft Research India \\
	\texttt{t-sandamle@microsoft.com} \\
	\And
        {\hspace{1mm}Manisha Padala} \\
	Indian Institute of Technology,\\ 
        Gandhinagar, India \\
	\texttt{manisha.padala@iitgn.ac.in} \\
        \And
        {\hspace{1mm}Sujit Gujar} \\
	International Institute of Information \\
        Technology (IIIT), Hyderabad, India \\
	\texttt{sujit.gujar@iiit.ac.in} \\
}

\date{}

\renewcommand{\headeright}{}
\renewcommand{\undertitle}{}
\renewcommand{\shorttitle}{Truthful Reverse Auctions for Adaptive Selection via Contextual Multi-Armed Bandits}

\hypersetup{
pdftitle={Truthful Reverse Auctions for Adaptive Selection via Contextual Multi-Armed Bandits},
pdfsubject={cs.GT, econ.TH},
pdfauthor={Pronoy ~Patra, Sankarshan ~Damle, Manisha ~Padala, Sujit ~Gujar},
pdfkeywords={Contextual MAB, Auction Design, Optimal Auction, Learning},
}

\begin{document}
\maketitle
\begin{abstract}
We study the problem of selecting large language models (LLMs) for user queries in settings where multiple LLM providers submit the cost of solving a query. From the user’s perspective, choosing an optimal model is a sequential, query-dependent decision problem: high-capacity models offer more reliable outputs but are costlier, while lightweight models are faster and cheaper. We formalize this interaction as a reverse auction  design problem with contextual online learning, where the user adaptively discovers which model performs best while eliciting costs from competing LLM providers. Existing multi-armed bandit (MAB) mechanisms focus on forward auctions and social welfare, leaving open the challenges of reverse auctions, provider-optimal outcomes, and contextual adaptation. We address these gaps by designing a resampling-based procedure that generalizes truthful forward MAB mechanisms to reverse auctions and prove that any monotone allocation rule with this procedure is truthful. Using this, we propose a contextual MAB algorithm that learns query-dependent model quality with sublinear regret. Our framework unifies mechanism design and adaptive learning, enabling efficient, truthful, and query-aware LLM selection.
\end{abstract}

% keywords can be removed
\keywords{Contextual MAB, Auction Design, Optimal Auction, Learning}

%%%%%%%%%%%%%%%%%%%%%%%%%%%%%%%%%%%%%%%%%%%%%%%%%%%%%%%%%%%%%%%%%%%%%%%%
\section{Introduction}
\label{sec:intro}
% motivating example
\emph{Large Language Models} (LLMs)~\cite{achiam2023gpt} increasingly shape the way individuals and organizations make decisions, collaborate, and innovate, reflecting their growing role as general-purpose AI systems~\citep{li2023camel,microsoft2023copilot,google2023bardupdate,su-etal-2024-language}. Beyond single-turn tasks, recent work deploys LLMs as \emph{agentic} systems that plan, act, and interact with tools or other LLMs to complete complex objectives~\cite{park2023generative,zhao2023depth}. In both single-agent and multi-agent settings, \emph{LLM Providers} (e.g., OpenAI~\cite{openai2025} or Anthropic~\cite{anthropic2025}) accelerate adoption by releasing multiple model variants that explicitly trade off capability, latency, and cost~\cite{anthropic2025claude4,openai2025gpt5}. Concretely, modern LLM deployments have established the problem of selecting an appropriate model for a given query. For example, GPT-5 employs a real-time router that dynamically allocates queries between a fast base model and a deeper reasoning model, leveraging signals such as task complexity, tool requirements, and explicit user intent~\cite{openai2025gpt5}. However, these routing scenarios focus on internal model orchestration within a single provider, where the user may or may not have control over model selection or cost optimization.

% \sg{i guess, we cud be more explicit about that the above work try to optimizae which model to be used..but assumes costs to be known..typically private...also if none of them uses contextual MAB..point out that...if none of them consider users utility , point out that these are shortcomings in existance literature..no need to use word shortcomings..but while reading message should be that...if these things are taken care, overall intro would look good} \sankarshan{it is contextual, they would also trade off utility - cost for assisgnments, but not in the mechanism design sense} \sankarshan{I added a line --}

%% Paper goal
\paragraph{Our Goal.}
From the perspective of a user, this trade-off remains central: the user derives \emph{utility} from the model’s performance while incurring a \emph{cost} for using it. Intuitively, high-capacity models offer more accurate or reliable outputs but are more expensive, whereas lightweight models are faster and cheaper but may underperform on complex tasks. By framing the problem from the user’s perspective, we aim to study how a user trades off the quality of a model’s output against its usage cost. At the same time, we account for the LLM provider’s ability to elicit the cost of solving a query. This formulation provides a foundation for analyzing optimal model selection strategies for the user and the design of mechanisms by the provider.

%% Addresss -- set up the "elicitation" and "learning"
% To formalize this, we focus on two complementary challenges: \textit{elicitation} and \textit{learning}. \sg{If, for a querry, expected performance of available LLMs is known, the user faced problem of elicitation.} Elicitation corresponds to a reverse auction mechanism design problem~\cite{} where multiple LLM providers submit costs for solving a user query, which may differ across models and tasks. \sg{In mechanism design, the goal could be maximize social welfare or the users utility. The later one is commonly known as \emph{optimal auction design}~\cite{myerson1981optimal}. If the performance of the models is known, designing reverse optimal auction is straightforward.} \sg{However, the performance is stochastic, and initially, the user may not be aware and need to learn. Thus,} Learning corresponds to an online, contextual optimization problem from the user’s perspective, where the user adaptively discovers which model is likely to solve a given query most effectively. Together, these perspectives capture the interaction between provider-side cost elicitation and user-side query-dependent model selection, providing a foundation for designing mechanisms that are both truthful and efficient.
To formalize this, we focus on two complementary challenges: \textit{elicitation} and \textit{learning}. When the expected performance of available LLMs for a given query is known, the user faces an elicitation problem. Researchers model this as a \emph{reverse auction} where providers submit costs for processing the query. In mechanism design, the objective may be to maximize social welfare or the buyer’s (user’s) utility, the latter corresponding to \emph{optimal auction design}~\cite{myerson1981optimal}. If model performance were known deterministically, designing a user-optimal reverse auction would be straightforward. However, model performance is stochastic and context-dependent; the user must learn it over time. This introduces a complementary {learning} challenge, formalized as a (contextual) sequential optimization problem, where the user adaptively identifies which provider performs best for each query. Together, these perspectives unify provider-side cost elicitation with user-side adaptive selection, forming the basis for truthful and efficient LLM allocation mechanisms.

\paragraph{Challenges.}
Addressing the combined elicitation and learning problem poses several challenges. Generally, researchers use multi-armed bandit (MAB) methods~\cite{bubeck2012regret} as a standard framework for adaptive or sequential optimization problems. In the context of mechanism design, existing work on MAB mechanisms primarily focuses on traditional forward auctions that measure efficiency in terms of social welfares~\cite{babaioff2009characterizing,babaioff2015truthful}. These approaches leave two important gaps: (i) no prior work studies MAB mechanisms for reverse auctions, where the user acts as a buyer and multiple LLM providers bid to solve a query, and (ii) no prior work studies MAB mechanisms for optimal auctions that aim to maximize provider utility rather than social welfare. Moreover, to our knowledge, except for \cite{abhishek2020designing}, no prior work incorporates contextual information into these mechanisms. Our work addresses these gaps by designing a mechanism that is both truthful and optimal from the LLM provider’s perspective, while enabling users to adaptively learn which model solves a query most effectively via a contextual MAB framework.

%% Challenges
    % \item no MAB mechanism for reverse auctions
    % \item no MAB mechanism for optimal auction\\ existing works for social welfare maximizing MAB auctions
    % \item barring Abhishek's paper, no contextual MAB mechanism 
    % \item ours is first to design a truthful reverse optimal auction with learning via a contextual MAB algorithm 

%% Approach & Contributions
% \subsection*{Our Approach \& Contributions}
% \sg{\cite{babaioff2009characterizing},\cite{devanur2009} showed that any deterministic auciton that deploys MAB algorithm for learning stochatic parameters incurs much higher regret than the one without involving auction. Towards this, \cite{} proposed interesting randomization procedure that ensures truthfulness and retains the regret guartees of the underlying MAB algorithm. \cite{} designed randomization, called \emph{resampling procedure} wherein the bids of the agents are updated randomly while ensuring truthfulness. This procedure works as long as the MAB algorithm is monotone. It should be noted that, the procedure designed is for forward auction and social welfare maximizing auction. In this work, we draw lines parallel to \cite{}, and design resampling procedure that works for reverse auction and for the goal of optimal auction. Once, such procedure is available, the user's problem remain is to design contextual MAB algorithm that ensures monotonicity, i.e., decreasing the reported cost should not decrease the allocation. To this end, we design XXXXX, a contextual MAB learning algorithm for reverse auction and ensures truthfulness and incurs regret of $O()$}.

\paragraph{Our Approach.}
\citet{babaioff2009characterizing} and \citet{devanur2009mmab} showed that any deterministic auction that directly embeds a MAB algorithm to learn stochastic parameters suffers substantially higher regret compared to its non-strategic counterpart. To address this, \citet{babaioff2009characterizing} introduced a randomized \emph{resampling procedure} that preserves truthfulness while retaining the regret guarantees of the underlying MAB algorithm. The key idea is to randomize  bids 
% reminder that 'bids' needs to be somewhere earlier
in a manner consistent with Bayesian incentive compatibility (BIC), provided that the MAB algorithm itself is monotone in allocations. However, this framework is limited to \emph{forward} auctions optimizing social welfare. In contrast, we develop an analogous randomized adjustment procedure tailored to \emph{reverse} auctions, where the objective is user-optimal (cost-minimizing) rather than welfare-maximizing. Once such a truthful reverse procedure is available, the remaining challenge lies in designing a contextual MAB algorithm that preserves monotonicity: ensuring that lowering the reported cost does not reduce allocation probability. To this end, we propose \ourmabalgo: a \underline{t}ruthful, provider-optimal, \underline{r}everse \underline{c}ontextual \underline{M}AB algorithm that achieves truthfulness and an $O(\sqrt{T})$ regret bound in reverse, stochastic settings.

\paragraph{Our Contributions.} \textbf{First,} we formalize the problem of selecting among competing LLM providers as a \emph{optimal reverse contextual MAB mechanism}, where the user solicits costs from providers and adaptively learns which model best solves each query. \textbf{Second,} we adapt the theory of optimal auctions to this reverse setting, showing how provider-side incentives can be aligned with user-side selection in our context. \textbf{Third,} we design a novel resampling procedure (Algorithm~\ref{alg:rosa}), inspired by \citet{babaioff2015truthful} but tailored to reverse, provider-optimal settings rather than forward, social welfare–maximizing ones. \textbf{Fourth,} we prove that any monotone allocation rule, combined with our resampling procedure, yields a truthful reverse MAB auction (Theorem~\ref{thm:truthful_epic_epir}). \textbf{Last,} we instantiate this result with \ourmabalgo\ (Algorithm~\ref{alg:suplinucb}) that leverages contextual information to learn model quality over time, achieving $O(\sqrt{T})$ regret across $T$ queries. Together, these results establish the first truthful, reverse, contextual MAB auction for adaptive LLM model selection.

% %%%%%%%%%%%%%%%%%%%%%%%%%%%%%%%%%%%%%%%%%%%%%%%%%%%%%%%%%%%%%%%%%%%%%%%%

\section{Related Work}
\label{sec:rel_wk}

Here, we review prior work on mechanism design and MABs.%, highlighting gaps relevant to reverse, provider-optimal LLM selection. 
% We first discuss canonical results in MABs and reverse auctions, and then turn to recent research at the intersection of MABs, auctions, and LLMs.

\paragraph{Multi-Armed Bandits (MABs)}
The multi-armed bandit (MAB) problem~\cite{robbins1952some,bubeck2012regret} is a fundamental framework for sequential decision-making under uncertainty. A lot of variants of Multi-Armed bandits have been made recently, viz, combinatorial~\cite{chen2013combinatorial, solanki2024fairness, solanki2023dpfcbandits}, budgeted~\cite{das2022budgeted}, sleeping~\cite{abhishek2021sleeping}, volatile~\cite{kumar2025regret}, etc. Contextual bandits~\cite{li2010contextual,abbasi2011improved} further extend this paradigm by modeling reward distributions as functions of observed contextual information. Applications of MABs span diverse domains such as federated learning, tariff design~\cite{chandlekar2023tariff}, and constrained subset selection~\cite{deva2021subset}. 

A wide range of solution strategies has been developed, most notably Upper Confidence Bound (UCB) methods and Thompson Sampling.

%Recent work studies bandit-based approaches for cost-efficient multi-LLM selection for different kinds of rewards~\cite{dai2024cost}. However, they do not assume that the LLMs are strategic and may misreport costs.

\paragraph{Reverse Auctions}
We model the LLM provider–user interaction as a reverse auction, building on the classic theory of optimal auctions and mechanism design~\cite{myerson1981optimal,maskin1984monopoly}. In a procurement or reverse auction, the buyer (the user, in our setting) seeks to minimize cost while ensuring truthful reporting by the sellers (the LLM providers). Several works extend Myerson’s framework~\cite{myerson1981optimal} to procurement settings~\cite{cary2008auctions} and to budget-feasible mechanisms~\cite{singer2010budget}. However, contrary to our setup, these mechanisms generally assume deterministic valuations.

\paragraph{LLMs and Multi-Armed Bandits} 
Recent advances in LLMs motivate the need for model routing and selection, where a user or system dynamically decides which model to invoke for a given query. Early work focused on static model choice based on latency and cost trade-offs~\cite{hu2024routerbench,yuan2025who}, while recent methods use dynamic routers that predict query complexity and select among heterogeneous models~\cite{openai2025gpt5,anthropic2025claude4,park2023generative,feng2024graphrouter}. A new line of research also studies the intersection of LLMs with MABs~\cite{dai2024cost,sun2025large,poon2025online}.
\citet{dai2024cost} formulate the multi-LLM selection problem as a cost-aware MAB, optimizing performance–cost trade-offs across heterogeneous LLMs under diverse reward models. \citet{poon2025online} introduces the first contextual bandit framework for adaptive LLM selection under unstructured, dynamically evolving prompts, achieving sublinear regret while balancing accuracy and cost.
Our formulation complements this line of work by introducing an incentive-aware framework: rather than training a router to minimize latency or maximize accuracy, we design a mechanism that elicits truthful costs from LLM providers.
% Other approaches train controller models to automatically select the most cost-efficient LLM~\cite{li2024adaptive,ye2024mixture}.

\paragraph{MABs and Mechanism Design} In many real-world decision-making scenarios, mechanism design and online learning intersect when agents’ private information influences the outcome of a learning process. Specifically, in multi-armed bandit (MAB) settings with strategic agents the learner must balance exploration and exploitation while eliciting truthful information about privately known parameters that impact reward estimation or allocation. However, since these parameters are self-reported, agents may strategically misrepresent them to manipulate the learning process or improve their expected payoffs. To ensure truthful participation while maintaining efficient learning, MAB-based mechanism design integrates incentive-compatible auction mechanisms into the learning framework~\cite{babaioff2009characterizing,babaioff2015truthful,devanur2009mmab}. 
Classical formulations consider \emph{forward auctions}, where an auctioneer repeatedly allocates resources to agents with private values, aiming to maximize social welfare. \citet{babaioff2009characterizing} introduced a resampling-based procedure to achieve Bayesian incentive compatibility (BIC) in stochastic MAB environments, a result later extended and refined in~\cite{babaioff2015truthful,devanur2009mmab,JAIN201844}. Subsequent work has explored combining bandit learning with auction design across domains such as crowdsourcing~\cite{jain2016deterministic},  smart grids~\cite{shweta2020aaai}, sponsored search auctions~\cite{gao2021auction}, and bidding strategy optimization for displaying ads~\cite{guo2024bayesian}, multi-unit procurement~\cite{bhat2019bidimensional}. However, these mechanisms primarily optimize {social welfare} and operate in forward auction settings. To our knowledge, only \citet{abhishek2020designing} incorporate contextual information into MAB mechanisms, still within a welfare-maximizing framework. In contrast, our work introduces the first reverse optimal MAB mechanism, aligning provider-side incentives with user-side adaptive learning while ensuring truthfulness and individual rationality.
In summary, no prior work studies reverse, contextual MAB mechanisms, where the user acts as a buyer eliciting private costs from multiple LLM providers acting as sellers.

% Our work extends this literature by (i) introducing the first **reverse, provider-optimal** MAB mechanism, and (ii) proposing a **resampling procedure** tailored to this setting, ensuring Bayesian incentive compatibility and individual rationality while allowing user-side contextual learning.

%%%%%%%%%%%%%%%%%%%%%%%%%%%%%%%%%%%%%%%%%%%%%%%%%%%%%%%%%%%%%%%%%%%%%%%%
\section{Preliminaries}
\label{sec::prelims}
We now present the necessary background for our model, and cover the key components of reverse auctions and multi-armed bandit (MAB) mechanisms that underpin our approach. 
Table~\ref{tab::notations} tabulates the notations used in this paper.

\begin{table}[t]
\centering
\caption{Notations~\label{tab::notations}}
\begin{tabular}{ll}
\toprule
\textbf{Symbol} & \textbf{Description} \\
\midrule
$T$ & Total number of rounds \\
$t$ & Round index, $t \in \{1, \dots, T\}$ \\
$M$ & Number of LLM providers \\
$i$ & Provider index, $i \in [M]$ \\
$q_t$ & Query from the user at round $t$ \\
$x_t \in \mathbb{R}^d$ & Context vector associated with query $q_t$\\
$c_i$ & True private cost of provider $i$ \\
$c$ & Cost vector $(c_1, \dots, c_M)$ \\
$b_i$ & Bid submitted by provider $i$ \\
$F_i(\cdot), f_i(\cdot)$ & CDF and PDF of $i$'s cost distribution \\
$\theta_i \in \mathbb{R}^d$ & Unknown parameter vector for provider $i$ \\
$r_{i,t}$ & Realized stochastic reward from provider $i$ at round $t$ \\
$v_{i,t}$ & True expected reward: $\theta_i^\top x_t$ \\
$\hat{v}_{i,t}$ & Estimated valuation for provider $i$ at round $t$ \\
$\mathcal{A}_{i,t}(c)$ & Allocation; 1 if $i$ is chosen at $t$, else 0 \\
$p_{i,t}(c)$ & Payment to provider $i$ at round $t$ \\
$u_{i,t}(c)$ & Utility: $p_{i,t}(c) - c_i \cdot \mathcal{A}_{i,t}(c)$ \\
$\Psi_i(c_i)$ & Virtual cost: $c_i + \frac{F_i(c_i)}{f_i(c_i)}$ \\
$z_i(c_{-i})$ & Critical cost threshold for $i$ given $c_{-i}$ \\
$h_{t-1}$ & History up to the start of round $t$ \\
$I_t$ & Provider selected at round $t$ by the mechanism \\
$i^*_t$ & Oracle-optimal provider at round $t$ \\
$\mathbb{R}_T(\mathcal{ALG})$ & Cumulative regret up to horizon $T$ \\
\bottomrule
\end{tabular}
\end{table}

%%%%%%%%%%%%%%%%%%%%%%%%%%%%%%%%%%%%
\subsection{Model}
%%%%%%%%%%%%%%%%%%%%%%%%%%%%%%%%%%%%
% \sg{E.g., you may want to start like there is a query $x^t$ at round $t$ and there are $[M] = \{1,2,\ldots,M\}$ LLM provides available to get the answer from. The user obtains a valuation of $v_i^t$ if the query is answered from LLM $i$. The LLM needs to be compensated for its cost $c_i^t$ for answering the query $x^t$. The $c_i^t$s are private to the respective LLMs. Let $p_i^t$ be the payment of $i$ at round $t$, if its service is requested. Thus, the utility to the user is $u^t = \sum_{i\in [M]}(v_i^t - p_i^t)\cdot {\Chi_i}^t$ where ${\Chi_i}^t$ is indicator variable denoting if $i$'s service was obtained or no. }\sg{if you have used any other indicator variable, I am fine..no need to stick to ${\Chi_i}^t$.}
% \sg{The user's goal is to maximize $u^t$.  LLM $i$'s utility $u_i^t = \Chi_i^t\cdot(p_i^t - c_i^t)$. We need to ensure that the LLMs provide $c_i^t$ truthfully, which is captured through \emph{incentive compatibility}. The overall problem in mechanism design theory is referred to as designing an \emph{optimal reverse auction}. First, we explain how to design an optimal truthful mechanism with the required definitions from game theory.}
% \sankarshan{@Pronoy: you should check if the above comments are taken care of.}

\paragraph{Setup}
We consider a user $\mathfrak{U}$ who submits a sequence of queries over rounds $t = 1, 2, \ldots, T$. 
At each round $t$, the user issues a query $q_t$, and a set of strategic LLM providers, denoted by $[M] = \{1, 2, \ldots, M\}$, compete to process it. 
Each provider may differ in capability, latency, and cost, reflecting heterogeneous model architectures or inference configurations~\cite{anthropic2025claude4,openai2025gpt5}.

\paragraph{Provider Costs and Bids}
Each provider $i \in [M]$ has a true private cost $\zeta_i$ per token, representing its internal computation cost. For a query $q_t$, provider $i$ would required $n_{i,q_t} \in \mathbb{N}$ tokens, the actual computation or latency cost incurred by provider $i$ is $c_{i,t} \;=\; \zeta_i \cdot n_{i,q_t}$.

The strategic provider submits a bid $b_{i,t}$ indicating the price to process the query (the provider submits bids per token, it can be easily converted to bid for the query). The goal for the provider would be to earn as much as possible.
In general, the desirable goal in mechanism (or auction) design is to propose (i) an allocation and (ii) a payment rule that incentivizes providers to bid $b_{i,t} = c_{i,t}$.

\paragraph{Allocation ($\mathcal{A}$) and Payments ($\mathcal{P}$)}
At any round $t$ $\in [T]$, the mechanism determines an allocation $\mathcal{A}_t(b_t)$
\[
   \mathcal{A}_t(b_t) \;=\; \{ \mathcal{A}_{1,t}(b_t), \ldots, \mathcal{A}_{M,t}(b_t) \}, 
   \qquad \sum_{i\in[M]} \mathcal{A}_{i,t}(b_t) \leq 1,
\]
where $\mathcal{A}_{i,t}(b_t) \in \{0,1\}$ where $1$ denotes that bidder/LLM $i$ is selected given reported bids $b_t = (b_{1,t},\ldots,b_{M,t})$. If 
provider $i$ is selected, the user makes a payment $p_{i,t}(b_t)$ to $i$ and others do not receive any payments. %\sankarshan{the others pay zero?}. 
Let $\mathcal{P}_t: p_t(b_t) =(p_{1,t}(b_t),\ldots,p_{M,t}(b_t))$ capture the payment rule. 

\paragraph{Query Context and User Valuation}
At round $t$, query $q_t$ is associated with observable features $x_t \in \mathbb{R}^d$, 
which may include query embeddings or user-specific attributes (e.g., OpenAI's saved memory feature~\cite{openai2024memory}). {Henceforth, we identify query $q_t$ with $x_t$. }For 
provider $i$, the expected value to the user of receiving a satisfactory 
response is modeled as
\[
   v_{i,t} 
   \;=\; \mathbb{E}[r_{i,t} \mid x_t] 
   \;=\; \theta_i^\top x_t,
\]
where $\theta_i \in \mathbb{R}^d$ denotes a parameter vector specific to 
provider $i$, and $r_{i,t}$ denotes the realized response quality for query $x_t$. %\sankarshan{can $x$ not directly depend on $r$, $x$?}

\paragraph{Utilities}
Let the user's total utility at round $t$ be, 
\[
   u_{0,t} 
   \;:=\;  \sum_{i \in [M]} 
    \mathcal{A}_{i,t}(b_t) \cdot \big( v_i - p_{i,t}(b_t) \big).
\]
%\sankarshan{the "0" is confusing. I would prefer $\mathbf{u}_t$ (bold)}\sg{In auctions, referring auctioneer as 0 is common. i am fine with what you proposed too.}
The utility of provider $i$ is given by
\begin{equation}   
   u_{i,t} 
   \;:=\; 
   \big( p_{i,t}(b_t) - c_{i,t} \big) \cdot \mathcal{A}_{i,t}(b_t)
\end{equation}

\paragraph{Note}
If $\theta_i$s are known, $\mathfrak{U}$'s objective at round $t$ is designing an optimal reverse auction. For completeness, we show how it can be done in the next subsection.  For ease of exposition: (i) we assume $n_{i,q_t}$ to be same for all queries. That is, $n_i = n_{i,q_t}\forall q_t$. Note that, as explained later, the results can be trivially extended for the case when $n_{i,q_t}$s are different. With this assumption, $c_{i,t}$ has the same support for all $t$, say $[\underline{c_i},\overline{c_i}]$ with probability density function being $f_i$ and cummulative distribution function being $F_i$. Additionally, $b_{i,t}$s become same for all slots and hence we can represent it as $b_i$. (ii) As we would be focusing on a single round, we omit the notation $t$ from the discussion on Optimal Reverse Auction. 

% %%%%%%%%%%%%%%%%%%%%%%%%%%%%%%%%%%%%
% \subsection{Optimal Reverse Auction}
% %%%%%%%%%%%%%%%%%%%%%%%%%%%%%%%%%%%%
% \sg{explain problem formulation here with known $v_i^t$s. you can say, as we focus on querry $x_t$, for this section, we drop $t$ for ease of exposition}

% \sg{put BIC/IIR defns...Myerson's characterization and then optimal auction}

\subsection{Optimal Reverse Auction}
\label{ssec:optiamlauction}
Auctions are typically of two types: (i) forward auction, wherein the auctioneer is a seller, and (ii) reverse auction, where the auctioneer is a buyer.
The seminal work by ~\citet{myerson1981optimal}
showed how to design an optimal forward auction; that is, an auction that maximizes the expected revenue. Researchers also extend these results for different reverse auction setups~(e.g., \citet{iyengar2008procurement}). Here, we show for our settings what an optimal reverse auction is. 

As the bids of other LLM providers are random variables for an arbitrary LLM provider $i$, we work with expected allocations and payments where expectation is w.r.t. the bids of other providers. 

\paragraph{Expected Allocation and Payments} We have,
\[
   \overline{\mathcal{A}_i}({b}_i) 
   \;=\; \mathbb{E}_{b_{-i}} \!\left[\, \mathcal{A}_i({b}_i, b_{-i}) \,\right],
      \qquad
   \overline{p_i}({b}_i) 
   \;=\; \mathbb{E}_{b_{-i}} \!\left[\, p_i({b}_i, b_{-i}) \,\right], 
\]
where the expectation is taken over the bid profile of all other providers, $b_{-i} = (b_1, b_2, \ldots, b_{i-1},\ b_{i+1}, \ldots, b_m)$.
 The expected utility of provider $i$ with true cost $c_i$ is therefore
\begin{equation}
   U_i(b_i|c_i) \;=\; -c_i \cdot \overline{\mathcal{A}_i}(b_i) \,+\, \overline{p_i}(b_i).
\end{equation}

That is, the provider incurs a cost $c_i$ to respond to the query irrespective of its bid $b_i$.

\paragraph{Mechanism Design Objective}
The user seeks to maximize her utility $u_0$ subject to ensuring that providers are incentivized to truthfully reveal their private costs $c_i$. 
This requirement of \emph{incentive compatibility} lies at the heart of 
mechanism design. Additionally, $\mathcal{A},\mathcal{P}$ should ensure that no provider incurs loss. 
%The mechanism must decide on an allocation  rule $\mathcal{}A$ and a payment rule $P$

\begin{definition}[Bayesian Incentive Compatible~\cite{Nisan_Roughgarden_Tardos_Vazirani_2007}]
    An LLM selection mechanism is Bayesian Incentive Compatible (BIC) if for every provider $i \in [M]$,:
    \[U_i(c_i|c_i) \geq U_i(b_i|c_i),~\forall i \in [M]\]
%\[
%\overline{p_i}(c_i) - c_i \cdot \overline{\mathcal{A}_i}(c_i) \ge \overline{p_i}(b_i) - c_i \cdot \overline{\mathcal{A}_i}(b_i)
%\]
\end{definition}

%\sankarshan{i think the last eqn is not needed}

\begin{definition}[Ex Post Individual Rationality (EPIR)~\cite{Nisan_Roughgarden_Tardos_Vazirani_2007}]
    The LLM selection mechanism satisfies Ex-post Individual Rationality (EPIR) if, for every provider $i \in [M]$ and for every possible vector of true costs $c = (c_1, \dots, c_M)$ drawn from the support of the cost distributions, the utility of provider $i$ is non-negative:
\[
u_i(c) = p_i(c) - c_i \cdot \mathcal{A}_i(c) \geq 0, \quad \forall i \in [M], \forall c \in C
\]
where $C=\prod_i \;[\underline{c_i},\overline{c_i}]$.
\end{definition}

The overall problem may thus be viewed as designing an 
\emph{optimal reverse auction} mechanism for allocating queries to providers. It is basically designing $\mathcal{A},\mathcal{P}$ that maximize $\mathfrak{U}$'s expected utility and satisfy the above two properties. 

With these definitions in place, we state Myerson’s classical characterization of 
BIC mechanisms for the reverse auction setting. 
\begin{theorem}[Myerson’s Characterization Theorem for Reverse Auction~\cite{myerson1981optimal}]
    A mechanism $(\mathcal{A}, \mathcal{P})$ is BIC \emph{iff} it satisfies the following two conditions:
\begin{enumerate}
    \item $\overline{\mathcal{A}_i}(\cdot)$ is non-increasing for all $i = 1, \ldots, n.$
    
    \item The utility takes the form:
    \[
    U_i(c_i) = U_i(\overline{c_i}) + \int_{c_i}^{\overline{c_i}} \overline{\mathcal{A}_i}(s)\, ds 
    \quad \forall c_i \in \mathcal{C}_i; \; \forall i = 1, \ldots, n.
    \]
\end{enumerate}
\end{theorem}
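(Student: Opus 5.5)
We prove the two directions separately, working throughout with the interim quantities $\overline{\mathcal{A}_i}$ and $\overline{p_i}$ and writing $U_i(c_i) := U_i(c_i\mid c_i) = \overline{p_i}(c_i) - c_i\,\overline{\mathcal{A}_i}(c_i)$ for the truthful expected utility. The whole argument rests on one algebraic identity. For any $b_i, c_i \in [\underline{c_i},\overline{c_i}]$,
\[
U_i(b_i\mid c_i) = \overline{p_i}(b_i) - c_i\,\overline{\mathcal{A}_i}(b_i) = U_i(b_i) + (b_i - c_i)\,\overline{\mathcal{A}_i}(b_i).
\]
So BIC is equivalent to the family of inequalities $U_i(c_i) \ge U_i(b_i) + (b_i - c_i)\,\overline{\mathcal{A}_i}(b_i)$ for all $b_i, c_i$. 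This is the reverse-auction analogue of the forward "subgradient" inequality, with the sign of the allocation term flipped.

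\textbf{Necessity.} Assume BIC and apply the inequality twice, once with $(c_i, b_i)$ and once with the roles swapped. This sandwiches the utility difference:
\[
(b_i - c_i)\,\overline{\mathcal{A}_i}(b_i) \;\le\; U_i(c_i) - U_i(b_i) \;\le\; (b_i - c_i)\,\overline{\mathcal{A}_i}(c_i).
\]
\begin{itemize}
\item \emph{Condition 1.} Take $b_i > c_i$. Comparing the two ends of the sandwich gives $\overline{\mathcal{A}_i}(b_i) \le \overline{\mathcal{A}_i}(c_i)$, so $\overline{\mathcal{A}_i}$ is non-increasing.
\item \emph{Condition 2.} The sandwich shows that $U_i$ is Lipschitz with constant at most $1$, since $0\le \overline{\mathcal{A}_i}\le 1$. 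Hence $U_i$ is absolutely continuous and differentiable almost everywhere. Dividing by $b_i - c_i$ and letting $b_i \to c_i$ gives $U_i'(c_i) = -\overline{\mathcal{A}_i}(c_i)$ at every continuity point of the monotone function $\overline{\mathcal{A}_i}$, which is all but countably many points. Integrating from $c_i$ to $\overline{c_i}$, which is legitimate by absolute continuity, yields $U_i(c_i) = U_i(\overline{c_i}) + \int_{c_i}^{\overline{c_i}} \overline{\mathcal{A}_i}(s)\,ds$.
\end{itemize}

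\textbf{Sufficiency.} Assume both conditions. Condition 2 gives $U_i(c_i) - U_i(b_i) = \int_{c_i}^{b_i} \overline{\mathcal{A}_i}(s)\,ds$, and this holds for either ordering of $c_i$ and $b_i$ under the usual orientation convention for integrals.
\begin{itemize}
\item If $b_i > c_i$: on $[c_i, b_i]$ monotonicity gives $\overline{\mathcal{A}_i}(s) \ge \overline{\mathcal{A}_i}(b_i)$, so the integral is at least $(b_i - c_i)\,\overline{\mathcal{A}_i}(b_i)$.
\item If $b_i < c_i$: the integral equals $-\int_{b_i}^{c_i}\overline{\mathcal{A}_i}(s)\,ds$. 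On $[b_i, c_i]$ we have $\overline{\mathcal{A}_i}(s) \le \overline{\mathcal{A}_i}(b_i)$, so $-\int_{b_i}^{c_i}\overline{\mathcal{A}_i}(s)\,ds \ge -(c_i - b_i)\,\overline{\mathcal{A}_i}(b_i) = (b_i - c_i)\,\overline{\mathcal{A}_i}(b_i)$.
\end{itemize}
In both cases the required inequality holds, so by the identity above the mechanism is BIC.

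\textbf{Main obstacle.} The delicate step is the regularity argument in necessity. We must pass from the pointwise sandwich to the integral formula without assuming that $\overline{\mathcal{A}_i}$ or $U_i$ is differentiable. The plan is to use the Lipschitz bound to obtain absolute continuity, and monotonicity of $\overline{\mathcal{A}_i}$ to control its countably many discontinuities. An alternative route avoids derivatives entirely: partition $[c_i, \overline{c_i}]$ finely and sum the sandwich inequalities. This traps $U_i(c_i) - U_i(\overline{c_i})$ between an upper and a lower Riemann sum of $\overline{\mathcal{A}_i}$. Monotone functions are Riemann integrable, so both sums converge to the integral.
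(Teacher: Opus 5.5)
Your proof is correct and complete. The two-sided sandwich $(b_i-c_i)\overline{\mathcal{A}_i}(b_i)\le U_i(c_i)-U_i(b_i)\le (b_i-c_i)\overline{\mathcal{A}_i}(c_i)$ yields monotonicity, and the integral formula follows from it either via the a.e.\ derivative or via your Darboux-sum alternative, which is the cleaner of the two. The converse follows from bounding $\int_{c_i}^{b_i}\overline{\mathcal{A}_i}$ by monotonicity in both orderings.

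The paper gives no proof of its own: it cites Myerson's 1981 paper and asserts that the reverse case is analogous. Your argument is that standard sandwich-and-integrate proof with the sign of the allocation term flipped, so it supplies exactly what the citation stands in for.
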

Building on this characterization, the optimal reverse auction can be derived. As, these steps are analogous to the original Myerson's design, we directly state the results. We begin by defining \emph{virtual cost} for a provider $i$.
\begin{equation}
\Psi_i(c_i)\;=\;c_i+\frac{F_i(c_i)}{f_i(c_i)},
\end{equation}
and assume a \emph{regularity condition} that $\Psi_i(\cdot)$ is strictly increasing in $c_i$[~\citet{myerson1981optimal}]. Under regularity condition, an optimal reverse auction's allocation rule selects the provider that maximizes the positive virtual surplus $v_i-\Psi_i(c_i)$,
i.e., selects     \[
    i^*\in\arg\max_{i}\{v_i-\Psi_i(c_i)\mid v_i-\Psi_i(c_i)\geq 0\}
    \]
    Here, we assume $\{i:v_i-\Psi_i(c_i)\geq 0\}$ is non-empty. Else, the query cannot be assigned to any provider. 
The payment rule becomes:
\[
p_i(c)=\mathcal{A}_i(c)\,c_i+\int_{c_i}^{\overline{c_i}} \mathcal{A}_i(c_{-i},s_i)\,ds_i
\]
Regularity implies a threshold form for the allocation: for fixed $c_{-i}$
there exists a critical value $z_i(c_{-i})$ such that $\mathcal{A}_i(c_{-i},s_i)=\mathbf{1}\{s_i\le z_i(c_{-i})\}$.
Hence the integral reduces to $\max\{z_i(c_{-i})-c_i,0\}$ and, if $i$ wins, the payment
becomes the critical threshold $p_i(c)=z_i(c_{-i})$. The mechanism is therefore
BIC, individually rational and maximizes $\mathfrak{U}$'s utility. In summary, 

\begin{theorem}[Optimal Reverse Auction]
\label{thm:opt_auc}
Under the regularity assumption that each virtual cost
$\Psi_i(c_i)=c_i+F_i(c_i)/f_i(c_i)$ is strictly increasing in $c_i$, a mechanism where
\begin{enumerate}
  \item $\mathcal{A}$: selects LLM provider $i^*$ such that:  \[i^*\in\arg\max_{i}\{v_i-\Psi_i(c_i)\mid v_i-\Psi_i(c_i)\geq 0\},
    \] 
  \item $\mathcal{P}$: pays the winner
    \[
    p_{i^*}(c)=z_{i^*}(c_{-i^*}),
    z_{i^*}(c_{-i^*})=\sup\{s:\;v_{i^*}-\Psi_{i^*}(s)\ge \max_{j\neq i^*}\{v_j-\Psi_j(c_j)\}\}
    \]
    and pays all others zero,
\end{enumerate}
is an optimal auction for the user to procure LLM service.  \end{theorem}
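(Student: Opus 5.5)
We follow Myerson's argument, adapted to the reverse setting, and lean on the Characterization Theorem stated above. We take as given that the mechanism must be BIC and EPIR. Our plan is to rewrite the user's expected utility as an expectation of virtual surplus, show the proposed allocation maximizes it pointwise, and then check that the threshold payment is exactly the payment the characterization forces.

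\textbf{Step 1 (revenue equivalence).} For any BIC mechanism, the Characterization Theorem gives
\[
U_i(c_i)=U_i(\overline{c_i})+\int_{c_i}^{\overline{c_i}}\overline{\mathcal{A}_i}(s)\,ds .
\]
Since $U_i(c_i)=\overline{p_i}(c_i)-c_i\overline{\mathcal{A}_i}(c_i)$, we obtain
\[
\overline{p_i}(c_i)=c_i\overline{\mathcal{A}_i}(c_i)+U_i(\overline{c_i})+\int_{c_i}^{\overline{c_i}}\overline{\mathcal{A}_i}(s)\,ds .
\]
We take the expectation over $c_i\sim F_i$ and swap the order of integration in the double integral (Fubini):
\[
\mathbb{E}_{c_i}\!\left[\int_{c_i}^{\overline{c_i}}\overline{\mathcal{A}_i}(s)\,ds\right]=\int_{\underline{c_i}}^{\overline{c_i}}\overline{\mathcal{A}_i}(s)F_i(s)\,ds=\mathbb{E}_{c_i}\!\left[\overline{\mathcal{A}_i}(c_i)\frac{F_i(c_i)}{f_i(c_i)}\right].
\]
This gives $\mathbb{E}[p_i]=\mathbb{E}[\mathcal{A}_i(c)\Psi_i(c_i)]+U_i(\overline{c_i})$. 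The user's expected utility is therefore
\[
\mathbb{E}_c\Big[\sum_i \mathcal{A}_i(c)\big(v_i-\Psi_i(c_i)\big)\Big]-\sum_i U_i(\overline{c_i}).
\]

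\textbf{Step 2 (pointwise maximization).} Individual rationality forces $U_i(\overline{c_i})\ge 0$, so the optimum sets $U_i(\overline{c_i})=0$. For each cost profile $c$, the first term is maximized by allocating to the index with the largest nonnegative virtual surplus and to no one if all surpluses are negative. This is exactly rule 1. Because the allocation was chosen without imposing any constraint, we must still check that it is feasible, i.e.\ monotone. Fix $c_{-i}$. Regularity makes $v_i-\Psi_i(s)$ strictly decreasing in $s$, so $i$ wins exactly when $s\le z_i(c_{-i})$. Hence $\mathcal{A}_i(\cdot,c_{-i})$ is non-increasing, and so is its expectation $\overline{\mathcal{A}_i}$. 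Condition 1 of the Characterization Theorem therefore holds, and the unconstrained maximizer is also the constrained optimum.

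\textbf{Step 3 (payments, BIC, EPIR).} We use the ex post payment $p_i(c)=\mathcal{A}_i(c)c_i+\int_{c_i}^{\overline{c_i}}\mathcal{A}_i(c_{-i},s)\,ds$. Taking expectations reproduces condition 2 with $U_i(\overline{c_i})=0$, so the mechanism is BIC. Under the threshold form:
\begin{enumerate}
\item if $i$ wins, the integral equals $z_i(c_{-i})-c_i$, so $p_i=z_i(c_{-i})$, which matches the stated sup formula;
\item if $i$ loses, the integrand vanishes on $[c_i,\overline{c_i}]$, so $p_i=0$.
\end{enumerate}
Ex post utility is then $\max\{z_i-c_i,0\}\ge 0$, which gives EPIR, and the mechanism attains the upper bound from Step 2.

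The main obstacle is the boundary handling. The threshold $z_i$ must be capped at $\overline{c_i}$, using $\sup$ over the support, so the integral reduces correctly. The case where $i$ wins with zero competing nonnegative surplus must compare against $0$ rather than the max over $j\neq i$ (the max taken together with $0$). Ties have measure zero under continuous $F_i$ and can be ignored. Interchanging integrals also needs $f_i>0$ on the support, which regularity implicitly assumes.
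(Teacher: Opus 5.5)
Your proposal is correct and follows the same Myerson-style route the paper invokes: characterization theorem, rewriting expected payments as virtual costs, pointwise maximization of virtual surplus, regularity giving a threshold allocation, and the critical-value payment. The difference is only that you write out the revenue-equivalence/Fubini step the paper skips as ``analogous to Myerson'', and your boundary remarks are well taken: the literal threshold formula in the statement needs the maximum taken together with $0$ (the reserve) and $s$ restricted to $[\underline{c_i},\overline{c_i}]$ to coincide with $\mathcal{A}_i(c)c_i+\int_{c_i}^{\overline{c_i}}\mathcal{A}_i(c_{-i},s)\,ds$.
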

\if 0
\begin{itemize}
  \item the allocation maximizes the virtual surplus $\sum_i (v_i-\Psi_i(c_i))\mathcal{A}_i(c)$
        subject to feasibility;
  \item the allocation rule is monotone in each provider's reported cost;
  \item the induced payment rule is incentive compatible (ex-post truthful) and individually rational;
  \item the winner's payment equals the critical threshold $z_{i^*}(c_{-i^*})$, i.e.,
        the Myerson form
        $p_i(c)=\mathcal{A}_i(c)c_i+\int_{c_i}^{\overline{c_i}} \mathcal{A}_i(c_{-i},s)\,ds$
        reduces to $p_{i^*}=z_{i^*}(c_{-i^*})$ when $i^*$ wins.
\end{itemize}

\fi 
\paragraph{Different $n_{i,q_t}$ for different Queries} Note that, if $n_{i,q_t}$ are different for each query, which is very natural, $\Psi_i$'s would depend upon query. In that case, we would have to replace $\Psi_i$ as $\Psi_{i,t}$ which in turn depend upon $n_{i,q_t}$. The overall notation would become clumsy and hence, we prefer to skip these technicalities for ease of exposition. 
Furthermore, to introduce an optimal reverse auction, we assumed $v_{i,t}=\mathbb{E}[r_{i,t}|x_t] = \theta_i^{\top}x_t$ to be known. However, in our LLM provider-user setup, $\theta_i$s are unknown but can be learnt.
%%%%%%%%%%%%%%%%%%%%%%%%%%%%%%%%%%%%
\subsection{Stochastic and Unknown Rewards}
%%%%%%%%%%%%%%%%%%%%%%%%%%%%%%%%%%%%
\paragraph{History and Learning}
In practical settings, the reward obtained from selecting a provider is stochastic. The user $\mathfrak{U}$ is interested in its mean $v_{i,t} \triangleq \mathbb{E}[r_{i,t}|x_t]=\theta_i^{\top}x_t$ where $\theta_i \in \mathbb{R}^d$ is unknown to $\mathfrak{U}$ and $x_t \in \mathbb{R}^d$ captures the context of the query at round $t$. Let $r_{I_t,t}$ denote the realized reward from provider $I_t \in [M]$ at round $t \in [T]$. Here $I_t \in [M]$ denotes the provider selected for query $x_t$ in round $t$. $\mathfrak{U}$ can estimate (learn) $\theta_i$ from historical observations.  Let $h_t$ capture the history till round $t$.
\[
   h_t \;=\; h_{t-1} \cup \{ x_t, I_t, r_{I_t,t} \}, 
   \qquad h_0 = \emptyset,
\]
We model the interaction as a \emph{contextual multi-armed bandit} (MAB) problem~\cite{lu2019conmab}, in particular, the above settings is also referred to as linear contextual MAB. Let $\mathcal{ALG}$ be the algorithm that learns $\theta_i$s and assigns a new query $x_t$ to one of the providers based on $x_t \mbox{ and } h_{t-1}$. 

The quality of a contextual bandit algorithm $\mathcal{ALG}$ is typically measured by its \emph{regret}, defined as the difference between the cumulative expected reward of an oracle that always selects the optimal provider (given full knowledge of $\theta_i$'s), and that of the learning algorithm. Formally, if $i^*_t \in \arg\max_i \theta_i^\top x^t$ denotes the optimal provider at round $t$, the cumulative regret up to horizon $T$ is
\begin{equation}
\mathbb{R}_T(\mathcal{ALG}) = \sum_{t=1}^{T} \left[ \left(\theta_{i_t^*}^\top x_t - \psi_{i_t^*} \left(b_{i_t^*}\right) \right) - \left(\theta_{I_t}^\top x_t - \psi_{I_t} \left(b_{I_t}\right) \right) \right],
\end{equation}
where $I_t$ is the provider chosen by the learning algorithm $\mathcal{ALG}$ at round $t$. The objective is to design allocation mechanisms that achieve sublinear regret (lower the better), i.e., $R(T) = o(T)$, ensuring asymptotic convergence to the optimal allocation.
With appropriately defining rewards, one can easily adapt LinUCB~\cite{li2010contextual} or SupLinUCB~\cite{chu2011contextual} for our settings. LinUCB-based algorithm would incur $\Omega(T)$ regret, though practically can perform better. SUP-LinUCB-based algorithm would incur $O(\sqrt{T})$ regret. The challenge in the SUP-LinUCB is, when the costs are private, the providers can manipulate the learning algorithm. Thus, there is need to carefully design MAB algorithm and payments~\cite{devanur2009mmab,babaioff2009characterizing}. Such auction design is called \emph{MAB Mechanism Design}.

%%%%%%%%%%%%%%%%%%%%%%%%%%%%%%%%%%%%
% \subsection{MAB Mechanism Design}
%%%%%%%%%%%%%%%%%%%%%%%%%%%%%%%%%%%%

\paragraph{MAB Mechanism Design} \citet{babaioff2009characterizing}, ~\citet{devanur2009mmab} address this challenge by first showing that any deterministic truthful MAB mechanism must be exploration-separated, and inevitably suffer a regret in the order of $\Omega(T^{2/3})$. That is, there exists a fundamental trade-off between truthfulness and learning efficiency. To overcome this limitation, \citet{babaioff2015truthful} propose a randomization via random resampling of bids, and demonstrate that it is possible to retain regret guarantees of a \emph{monotone} learning algorithm in the presence of the strategic providers. However, their resampling procedure is designed for forward auctions and aims to optimize social welfare. Contrarily, our goal is to optimize the user's utility, i.e., design an optimal reverse auction. 

% \paragraph{Our Approach} To tackle user's problem, we need a monotone allocation rule for contextual MAB with appropriately defined reward function and a resampling procedure. We elaborate on this in the next section. 
%%%%%%%%%%%%%%%%%%%%%%%%%%%%%%%%%%%%%%%%%%%%%%%%%%%%%%%%%%%%%%%%%%%%%%%%

\section{Our Approach}
\label{sec:our_apprch}
%%%%%%%%%%%%%%%%%%%%%%%%%%%%%%%%%%%%%%%%%%%%%%%%%%%%%%%%%%%%%%%%%%%%%%%%
% \sankarshan{pick either "resampling" or "resampling" and stick with it}\sg{lets stick with resampling}

In a nutshell, to design an optimal reverse auction with learnable rewards, we require a monotone allocation rule for contextual MAB with appropriately defined reward function and a resampling procedure. Towards this, we begin by defining monotonicity of the allocation rule. Then, in Section~\ref{ssec:montonealloc}, we propose two algorithms: \ourlinucb\ and \oursupucb\ adapted from SupLinCUB~\cite{chu2011contextual} to ensure monotonicity. Section~\ref{ssec:resample} introduces our resampling procedure, \ourrealgo, and proves that any monotone allocation rule through a contextual MAB algorithm ($\mathcal{ALG}$) combined with \ourrealgo\ ensures truthful reporting of the costs by the providers while retaining regret guarantees of $\mathcal{ALG}$. 

\subsection{Ex-Post Monotonicity}

The ex-post monotonicity condition ensures that the allocation rule respects the economic intuition that bidding more competitively (i.e., lowering one’s declared cost) should not disadvantage a provider in terms of allocation opportunities. More formally,

\begin{definition}[Ex Post Monotone]
An allocation rule \( \mathcal{A} \) is \emph{ex-post monotone} for a reverse auction if, for every possible sequence of context arrivals ($x_1,x_2,\ldots,x_T$) and reward realizations, for each provider \( i \in [M] \), for all bids \( b_{-i} \) of other providers, and for any two possible bids of provider \( i \), \( b_i' \leq b_i \), we have:
\[
\mathcal{A}_i(b_i', t) \geq \mathcal{A}_i(b_i, t) \quad \text{for all rounds } t.
\]
Here, \( \mathcal{A}_i(b,t) \) denotes the total number of times provider \( i \) is allocated in the first \( t \) rounds when the bid vector is \( b \).
\end{definition}

This property is  in establishing the truthfulness of mechanisms within our stochastic reverse auction setting. Let $\mathcal{ALG}$ be a monotone allocation rule.

%%%%%%%%%%%%%%%%%%%%%%%%%%%%%%%%%%%%
\subsection{Monotone MAB Algorithms}
\label{ssec:montonealloc}
%%%%%%%%%%%%%%%%%%%%%%%%%%%%%%%%%%%%
In the literature, two of the most popular MAB algorithms for contextual bandits are, LinUCB~\cite{li2010contextual} and SupLinUCB~\cite{chu2011contextual}. We can leverage these algorithms using net rewards as, $v_{i, t} - \Psi_i(b_i)$; $\Psi_i(b_i)$ is subtracted as $\mathfrak{U}$'s goal is to deploy optimal reverse auction, which selects a provider having the highest $v_{i,t}-\Psi_i(b_i)$. However, with just this update, the allocation won't be monotone. To this extent, we propose \ourlinucb\ (Algorithm~\ref{alg:baselinucb}) and \oursupucb\ (Algorithm~\ref{alg:suplinucb}) that provide monotone allocations.

\paragraph{Difference between \oursupucb\ and SupLinUCB} 
 Broadly, \oursupucb\  works similarly to SupLinUCB. I.e., it works in stages. Each stage consists of twice the previous number of rounds, thus a total $O(\log T)$ stages. In each stage, the algorithm maintains an active set of providers. As the stage progresses, the active set keeps eliminating the suboptimal providers. However, unlike SupLinUCB, which considers a common $\theta$, that is $\theta_i$s are same, the context for each provider is different. As such, \oursupucb\ has independent $\theta_i$s to be learned for each provider $i$. 

To make SupLinUCB monotone, we decouple learning from the auction by limiting the learning only by the round robin method. Compared to SupLinUCB, we select a provider from the active set one by one (refer to Line 7 of Algorithm~\ref{alg:suplinucb}). Also, we add an exploitation condition, Lines 25-27, wherein we exploit within the current stage. We consider the reward with the confidence subtracted by the virtual cost to match it with the optimal reverse auction setting (as described in Section~\ref{ssec:optiamlauction}).
I.e., the selection is done using   $\hat{v}_{i, t} + w_{i,t}^s - \Psi_i(b_i)$; $w_{i,t}^s$ is the confidence term in UCB.  We defer the proof of the monotonicity of \oursupucb\ to Section~\ref{sec:thy_ana}.

\begin{algorithm}[t]
\caption{\ourlinucb}
\label{alg:baselinucb}
\begin{algorithmic}[1]
\State \textbf{Input:} Parameter $\alpha \in \mathbb{R}^+$, History sets $\{H_{i,t}\}_{i \in [M]}$, where $\Lambda_{i,t} \subseteq \{1, 2, \dots, t-1\}$
\State Observe context vector $x_t \in \mathbb{R}^d$
\For{each provider $i \in [M]$}
    \State $A_{i,t} \gets I_d + \sum_{\tau \in H_{i,t}} x_\tau x_\tau^T$
    \State $g_{i,t} \gets \sum_{\tau \in H_{i,t}} r_{i,\tau} x_\tau$
    \State $\hat{\theta}_{i,t} \gets A_{i,t}^{-1} g_{i,t}$
    \State \textsl{// Calculate value estimate and confidence width}
    \State $\hat{v}_{i,t}^s \gets \hat{\theta}_{i,t}^\top x_t$
    \State $w_{i,t}^s \gets \alpha \sqrt{x_t^T A_{i,t}^{-1} x_t}$
\EndFor
\State \textbf{Return:} $\{ \hat{v}_{i,t}^s \}_{i \in [M]}$ and $\{ w_{i,t}^s \}_{i \in [M]}$
\end{algorithmic}
\end{algorithm}
%%%%%%%%%%%%%
\begin{algorithm}[h]
\caption{REV-SupLinUCB-S-OPT}
\label{alg:suplinucb}
% \small
\begin{algorithmic}[1]
\State \textbf{Input:} Virtual costs $\Psi(b) = (\Psi_1(b_1), \dots, \Psi_n(b_n))$
\State \textbf{Initialization:}
\State $S_{\text{max}} \gets \lceil \ln{T} \rceil$
\State $\Lambda_{i,t}^s \gets \emptyset$ for all $i \in [M], t \in [T], s \in [S_{\text{max}}]$
\For{$t = 1, 2, \dots, T$}
    \State $s \gets 1$ and $\hat{A}_1 \gets [M]$
    \State $j \gets 1 + (t \mod n)$
    \Repeat
        \State \parbox[t]{.85\linewidth}{Use REV-BaseLinUCB-S with index sets $\{ \Lambda_{i,t}^s \}_{i \in [M]}$ and context $x_t$ to get $(\hat{v}_{i,t}^s)_{i \in \hat{A}_s}$ and $(w_{i,t}^s)_{i \in \hat{A}_s}$.}
        \If{$j \in \hat{A}_s$ and $w_{j,t}^s > 2^{-s}$} 
            \State \textsl{// Forced exploration}
            \State Select $I_t = j$
            \State \parbox[t]{.85\linewidth}{Update history sets for all stages $s' \in [S_{\text{max}}]$: \\ $\Lambda_{i,t+1}^{s'} \leftarrow 
\begin{cases} 
\Lambda_{i,t}^{s'} \cup \{t\} & \text{if } s = s' \\
\Lambda_{i,t}^{s'} & \text{otherwise}
\end{cases}$
}
        \ElsIf{$w_{i,t}^s \leq \frac{1}{\sqrt{T}}$ $\forall \ i \in \hat{A}_s$} 
            \State \textsl{// Pure exploitation (final phase)}
            \State Select $I_t = \arg\max_{i \in \hat{A}_s} \left\{ (\hat{v}_{i,t}^s + w_{i,t}^s) - \Psi_i(b_i) \right\}$
            \State \parbox[t]{.85\linewidth}{Update index sets for $I_t$ at all stages: \\ $\Lambda_{I_t,t+1}^{s'} \gets \Lambda_{I_t,t}^{s'} \quad\forall s' \in [S]$}
            \State $\Lambda_0 \gets \Lambda_0 \,\cup \{t\}$
        \ElsIf{$w_{i,t}^s \leq 2^{-s}$ $\forall \ i \in \hat{A}_s$} 
            \State // Stage advancement
            \State $\hat{A}_{s+1} \gets \big\{ i \in \hat{A}_s \mid$
            \State \quad $(\hat{v}_{i,t}^s + w_{i,t}^s) - \Psi_i(b_i) \geq$
            \State \quad $\displaystyle \max_{a \in \hat{A}_s} \big\{ (\hat{v}_{a,t}^s + w_{a,t}^s) - \Psi_a(b_a) \big\} - 2 \cdot 2^{1-s} \big\}$
            \State $s \gets s + 1$
        \Else
            \State \textsl{// Exploitation within the current stage}
            \State Select $I_t = \arg\max_{i \in \hat{A}_s} \left\{ (\hat{v}_{i,t}^s + w_{i,t}^s) - \Psi_i(b_i) \right\}$
            % \State \parbox[t]{.85\linewidth}{Update history sets for $I_t$ at all stages: $\Lambda_{I_t,t+1}^{s'} \gets \Lambda_{I_t,t}^{s'} \cup \{t\}$.}
            \State $\Lambda_{est}^s \gets \Lambda_{est}^s \cup \{t\}$
        \EndIf
    \Until{$I_t$ is selected}
\EndFor
\end{algorithmic}
\end{algorithm}

%%%%%%%%%%%%%%%%%%%%%%%%%%%%%%%%%%%%
\subsection{\ourrealgo: Resampling Procedure}
\label{ssec:resample}
%%%%%%%%%%%%%%%%%%%%%%%%%%%%%%%%%%%%

To avoid exploration-separated mechanisms that incur higher regret for truthful implementation, we need randomization in the mechanism. Towards this, we randomly update the bids of the providers through a \emph{resampling procedure}. Motivated from \cite[Definition 4.3]{babaioff2015truthful}, we formally define it for our setting next.

\begin{definition}[Reverse Self-Resampling Procedure]
\label{def:self_resample}
Let $I$ be a nonempty interval in $\mathbb{R}$.  
A \emph{reverse self-resampling procedure} with support $I$ and resampling probability $\mu \in (0, 1)$ is a randomized algorithm with input $b_i \in \mathcal{I}$, random seed $rs_i$, and output $\widetilde{b_i}(b_i; rs_i) \in I$, that satisfies the following properties:

\begin{enumerate}
    \item For every fixed $rs_i$, $\widetilde{b_i}(b_i; rs_i)$ is non-decreasing in $b_i$.
    \item With probability $1 - \mu$, we have $\widetilde{b_i}(b_i; rs_i) = b_i$. Otherwise, $\widetilde{b_i}(b_i; rs_i) > b_i$.
    \item Consider the two-variable function
    \[
    F(a_i, b_i) = \Pr[\, \widetilde{b_i}(b_i; rs_i) < a_i \mid \widetilde{b_i}(b_i; rs_i) > b_i \,], \forall \ a_i > b_i 
    \]
    which is called as the \emph{distribution function} of the reverse self-resampling procedure.  
    For each $b_i$, the function $F(\cdot, b_i)$ must be differentiable and strictly increasing on the interval $\mathcal{I} \cap (b_i, \overline{c_i})$.
\end{enumerate}
\end{definition}

Algorithm~\ref{alg:rosa} presents a construction for a reverse auction with the self-sampling procedure from Definition~\ref{def:self_resample}.

%%%%%%%%%%%%%%%%%%%%%%%%%%
\begin{algorithm}[h]
\caption{\ourrealgo: Reverse OneShot Adjustment, A Non-Recursive Self-Resampling Procedure for Reverse Auction}
\label{alg:rosa}
\begin{algorithmic}[1]
\Require Bid $b_i \in [0, \infty)$, resampling parameter $\mu \in [0,1]$, upper cost bound $\overline{c_i}$, random seed $rs_i$
\State Sample a random variable $\gamma_i$ uniformly from $[0,1]$
\State Set resampling factor \[\xi_i = 
\begin{cases}
1 & \text{with probability } 1 - \mu \\
1 + \gamma_i \left( \dfrac{\overline{c_i}}{b_i} - 1 \right) & \text{with probability } \mu
\end{cases}\]
\State Construct and return the modified bid $\widetilde{b_i}(b_i, rs_i) = \xi_i b_i$
\end{algorithmic}
\end{algorithm}
%%%%%%%%%%%%%%%%%%%%%%%%%%

\begin{proposition}[]
\ourrealgo\  is a reverse self-resampling procedure with support $\mathbb{R}^+$ and resampling probability $\mu$.  
The distribution function for this procedure is $F(a_i, b_i) = \frac{a_i - b_i}{\overline{c_i} - b_i}$.
\end{proposition}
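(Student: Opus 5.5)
The plan is to check the three properties of Definition~\ref{def:self_resample} one at a time for the map $\widetilde{b_i}(b_i;rs_i)=\xi_i b_i$ produced by \ourrealgo. The random seed $rs_i$ is identified with a pair: a Bernoulli coin $\beta_i\in\{0,1\}$ with $\Pr[\beta_i=1]=\mu$, and the uniform draw $\gamma_i\in[0,1]$. Both are fixed independently of $b_i$. Multiplying out gives
\[
\widetilde{b_i}=\begin{cases} b_i & \beta_i=0,\\ b_i+\gamma_i(\overline{c_i}-b_i) = (1-\gamma_i)b_i+\gamma_i\overline{c_i} & \beta_i=1.\end{cases}
\]
This affine form, a convex combination of $b_i$ and $\overline{c_i}$, is what every later step uses. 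Throughout I restrict to bids $b_i<\overline{c_i}$, since bids at or above $\overline{c_i}$ are degenerate for the distribution function.

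\textbf{Property 1 (monotonicity).} Fix $rs_i$, i.e.\ fix $\beta_i$ and $\gamma_i$. If $\beta_i=0$, the output is the identity map, which is non-decreasing. If $\beta_i=1$, the output is $(1-\gamma_i)b_i+\gamma_i\overline{c_i}$. This is affine in $b_i$ with slope $1-\gamma_i\ge 0$, so it is also non-decreasing.

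\textbf{Property 2 (resampling probability).} With probability $1-\mu$ we have $\xi_i=1$, so $\widetilde{b_i}=b_i$. Otherwise $\widetilde{b_i}-b_i=\gamma_i(\overline{c_i}-b_i)$. This is strictly positive whenever $\gamma_i>0$, which holds almost surely because $\gamma_i$ is uniform. The event $\gamma_i=0$ has measure zero and can be absorbed into the first branch, or handled by sampling $\gamma_i$ from $(0,1]$. I expect this to be the only real subtlety: stating the strict inequality precisely as an almost-sure statement, and recording the standing assumption $b_i<\overline{c_i}$, since $\xi_i$ is undefined at $b_i=0$ and the inequality is vacuous at $b_i\ge\overline{c_i}$. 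The claimed support $\mathbb{R}^+$ should likewise be read as outputs lying in $[b_i,\overline{c_i}]\subset\mathbb{R}^+$.

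\textbf{Property 3 (distribution function).} Conditioning on $\widetilde{b_i}>b_i$ is the same as conditioning on $\beta_i=1$, up to null sets. On that event $\widetilde{b_i}$ is uniform on $(b_i,\overline{c_i})$, because it is an affine image of the uniform variable $\gamma_i$. Hence, for $a_i\in(b_i,\overline{c_i})$,
\[
F(a_i,b_i)=\Pr\!\left[\gamma_i<\tfrac{a_i-b_i}{\overline{c_i}-b_i}\right]=\frac{a_i-b_i}{\overline{c_i}-b_i}.
\]
For fixed $b_i$ this is affine in $a_i$ with positive slope $1/(\overline{c_i}-b_i)$. It is therefore differentiable and strictly increasing on $(b_i,\overline{c_i})$, which completes the verification.
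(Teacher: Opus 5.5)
Your proposal is correct and takes the same route as the paper. The paper treats Properties~1 and~2 as immediate from the description of \ourrealgo\ and obtains Property~3 from the fact that, conditional on $\widetilde{b_i}>b_i$, the output is uniform on $[b_i,\overline{c_i}]$; you make each step explicit through the affine form $(1-\gamma_i)b_i+\gamma_i\overline{c_i}$, and your restriction $b_i<\overline{c_i}$ (with the almost-sure caveat for $\gamma_i=0$) is a real precondition that the paper's claim of support $\mathbb{R}^+$ leaves implicit.
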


\begin{proof}
Properties~1 and~2 in Definition~\ref{def:self_resample} for \ourrealgo\ follow immediately from the description of Algorithm~\ref{alg:rosa}.  For Property~3, by conditioning on the event $\widetilde{b_i}(b_i; rs_i) > b_i$, we see that the distribution of $\widetilde{b_i}(b_i; rs_i)$ is uniform in $[b_i, a_i]$.  
\end{proof}

Note that, typically, for each provider we need to use different random seed $rs_i$. Hence, for each provider, the resampling procedure is treated as $rp_i$. Let $\mathbf{RP}$ be ensemble of these procedure.  

% --------------------- Algorithm ---------------------

We now illustrate that with a monotone MAB algorithm $\mathcal{ALG}$, one can design a truthful reverse auction using \ourrealgo.

% --------------------- Mechanism ---------------------
\begin{figure}[t]
\centering
\begin{tcolorbox}[mechbox, title={\textsc{Rev-GTM}: Generic Reverse Transformation Mechanism}]
\label{mech:revgtm}
\begin{algorithmic}[1]
\State \textbf{Input:} Bid vector $b = (b_1, b_2, \dots, b_M)$
\State \textbf{Output:} Allocation and payment $(\mathcal{A}, \mathcal{P})$
\For{each provider $i \in [M]$}
    \State Obtain $\widetilde{b_i} = \ourrealgo(b_i, \mu, \overline{c_i}, rs_i)$
\EndFor
\State Construct modified bid vector $\widetilde{b} = (\widetilde{b_1}, \widetilde{b_2}, \dots, \widetilde{b_M})$
\State \textbf{Allocation:} Allocate according to monotone rule $\mathcal{ALG}(\widetilde{b})$
\State \textbf{Payment:}
\For{each provider $i \in [M]$}
    \If{$\xi_i = 1$}
        \State Set $p_i \gets b_i \cdot \mathcal{A}_i(x)$
    \Else
        \State Set $p_i \gets b_i \cdot \mathcal{A}_i(x) + \dfrac{1}{\mu} \cdot \mathcal{A}_i(x) \cdot (\overline{c_i} - b_i)$
    \EndIf
\EndFor
\end{algorithmic}
\end{tcolorbox}
\end{figure}

Suppose we are given a monotone allocation rule $\mathcal{ALG}$ and a vector $\mathbf{RP}$ of reverse self-resampling procedures of each provider $i \in [M]$ that has a resmapling probability $\mu \in (0, 1)$, support $\mathcal{T}_i$, and output value $\widetilde{b_i}(b_i; rs_i)$. 
Based on the resampling procedure, we propose a generic transformation that combines $\mathcal{ALG}, \mu, \mathbf{RP}$ into a randomized mechanism $\mathcal{M}={\textsc{Rev-GTM}}(\mathcal{ALG}, \mu, \mathbf{RP})$. 
Next, we introduce a couple of desirable properties of any randomized mechanism. 

\begin{definition}[Truthfulness in Expectation (Ex-post Incentive Compatibility, EPIC)]
    For every provider $i\in [M]$ and for every realization of other provider' bids $b_{-i}$, truthful reporting of its cost $b_i = c_i$ maximizes the provider’s expected utility over the mechanism’s internal randomness. That is, $\forall i\in M,\forall t$
    $$u_{i,t}(c_i,b_{-i}|c_i)\geq u_{i,t}(b_i,b_{-i}|c_i) \forall b_i, \forall b_{-i},\forall h_{t-1} $$

\end{definition}

\begin{definition}[Universal Ex-post Individual Rationality (EPIR)]
    For every realization of the mechanism’s randomness, each provider’s realized utility is non-negative when bidding truthfully, i.e., $\forall i\in [M],\forall t, \forall h_{t-1}$
    \[
    u_{i,t}(c_i, b_{-i}|c_i) \ge 0.
    \]
\end{definition}

Now, we state important result of this section. 

\begin{theorem}
\label{thm:truthful_epic_epir}
Consider an arbitrary single-parameter reverse auction domain, and let $\mathcal{ALG}$ be a monotone allocation rule. Suppose we are given an ensemble $\mathbf{RP}$ of self-resampling procedures, where each procedure has resampling probability $\mu \in (0,1)$.
Let the mechanism $\mathcal{M} = ({\mathcal{A}}, {\mathcal{P}}) = {\textsc{Rev-GTM}}(\mathcal{ALG}, \mu, \mathbf{RP})$ be the mechanism constructed from $\mathcal{ALG}$ via the self-resampling transformation.
Then the mechanism $\mathcal{M}$ satisfies the following properties:
\begin{enumerate}
    \item EPIC, EPIR
    \item For $m$ providers and any bid vector $b$ (and any fixed random seed of nature), allocations $\mathcal{A}(b)$ and $\mathcal{ALG}(b)$ are identical with probability at least $1 - M\mu$.

    \item If $T = \mathbb{R}_{+}^{M}$ (all types are positive), and each $\mathbf{RP}_i$ is the reverse self-resampling procedure, then mechanism $\mathcal{M}$ is ex-post no-positive-transfers, and never pays any provider $i$ more than 
    \[
        b_i \cdot \mathcal{ALG}_i(\widetilde{b}) + (\overline{c_i} - b_i) \cdot \mathcal{ALG}_i(\widetilde{b})\left( \frac{1}{\mu} \right).
    \]
\end{enumerate}
\end{theorem}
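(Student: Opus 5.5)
The plan is to fix a round $t$, a history $h_{t-1}$, the other providers' bids $b_{-i}$ and the other providers' random seeds, and then study provider $i$'s realized and expected utility as a function of her own bid $b_i$. Once all of that is fixed, the monotone rule $\mathcal{ALG}$ reduces to a one-dimensional function $a(y) := \mathcal{ALG}_i(y,\widetilde{b}_{-i})$, which is non-increasing in $y$ by ex-post monotonicity. We have $\mathcal{A}_i = a(\widetilde{b_i})$, where $\widetilde{b_i}=b_i$ with probability $1-\mu$. With probability $\mu$, $\widetilde{b_i}=b_i+\gamma_i(\overline{c_i}-b_i)$ is uniform on $[b_i,\overline{c_i}]$, by the Proposition on \ourrealgo.

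\textbf{EPIC.} Take the expectation of the \textsc{Rev-GTM} payment over $rs_i$. The non-resampled branch and the $b_i\mathcal{A}_i$ part of the resampled branch together contribute $b_i\,\overline{A}(b_i)$, where $\overline{A}(b_i)=(1-\mu)a(b_i)+\mu\,\mathbb{E}_\gamma[a(\widetilde{b_i})]$ is the expected allocation. The correction term contributes $\mu\cdot\frac{1}{\mu}(\overline{c_i}-b_i)\,\mathbb{E}_\gamma[a(b_i+\gamma(\overline{c_i}-b_i))]=\int_{b_i}^{\overline{c_i}}a(s)\,ds$. Hence
\[
U_i(b_i\mid c_i)=(b_i-c_i)\,\overline{A}(b_i)+\int_{b_i}^{\overline{c_i}}a(s)\,ds .
\]
The main step is to show that $U_i(c_i\mid c_i)-U_i(b_i\mid c_i)\ge 0$ for all $b_i$. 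I would compare this expression with the Myerson form in the Characterization Theorem, that is, check whether the right-hand side equals $U_i(\overline{c_i})+\int_{b_i}^{\overline{c_i}}\overline{A}(s)\,ds$ together with a non-increasing $\overline{A}$. The alternative is to bound the difference directly:
\[
U_i(c_i)-U_i(b_i)=\int_{c_i}^{b_i}a(s)\,ds-(b_i-c_i)\,\overline{A}(b_i),
\]
and then use that $a$ is non-increasing, so that $a(s)\ge a(b_i)$ on $[c_i,b_i]$ when $b_i>c_i$, with the symmetric argument when $b_i<c_i$. I expect this to be the main obstacle. The term $\overline{A}(b_i)$ mixes in $\mathbb{E}_\gamma[a(\widetilde{b_i})]$, whose sign relative to $a(b_i)$ has to be controlled. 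If the direct bound fails, I would fall back on the Babaioff et al.\ route: differentiate $\overline{A}$ using Property~3 of Definition~\ref{def:self_resample} (differentiability of $F(\cdot,b_i)$), and verify the envelope identity $U_i'(b_i)=-\overline{A}(b_i)$ at truthful bids.

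\textbf{EPIR.} Set $b_i=c_i$. In the branch $\xi_i=1$ the realized utility is $(c_i-c_i)\mathcal{A}_i=0$. In the resampled branch it equals $\frac{1}{\mu}\mathcal{A}_i(\overline{c_i}-c_i)\ge 0$, because $c_i\le\overline{c_i}$ and $\mathcal{A}_i\in\{0,1\}$. So universal EPIR holds for every realization of the randomness.

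\textbf{Part 2.} On the event that no provider is resampled, i.e.\ $\xi_i=1$ for all $i$, we have $\widetilde{b}=b$, and therefore $\mathcal{A}(b)=\mathcal{ALG}(b)$. Each provider is resampled with probability $\mu$, so a union bound over the $M$ providers gives that this event has probability at least $1-M\mu$.

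\textbf{Part 3.} With positive types, every payment in \textsc{Rev-GTM} is a sum of nonnegative terms: $b_i\ge 0$, $\overline{c_i}-b_i\ge 0$, and $\mathcal{A}_i\ge 0$. So no payment ever flows in the wrong direction, which is the ex-post no-positive-transfers property. Finally, the payment in the non-resampled branch is dominated by the payment in the resampled branch. The resampled payment equals $b_i\mathcal{ALG}_i(\widetilde{b})+(\overline{c_i}-b_i)\mathcal{ALG}_i(\widetilde{b})/\mu$, which is exactly the stated upper bound.
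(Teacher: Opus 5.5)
Your expected-utility formula $U_i(b_i\mid c_i)=(b_i-c_i)\overline{A}(b_i)+\int_{b_i}^{\overline{c_i}}a(s)\,ds$ is correct. So are your arguments for EPIR (which the paper never actually checks), Part~2 (your union bound is the correct form of the paper's ``probability $1-M\mu$''), and Part~3.

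\textbf{The gap is the EPIC step, and it cannot be closed.} For $b_i>c_i$ your argument works: resampling only raises the bid, so $\overline{A}(b_i)\le a(b_i)\le a(s)$ for $s\in[c_i,b_i]$. The ``symmetric'' case $b_i<c_i$ fails. It needs $(c_i-b_i)\,\overline{A}(b_i)\ge\int_{b_i}^{c_i}a(s)\,ds$, but monotonicity only gives upper bounds on both sides: $\overline{A}(b_i)\le a(b_i)$ and $\int_{b_i}^{c_i}a\le(c_i-b_i)a(b_i)$. The envelope fallback fails as well, because your formula gives $\frac{d}{dc}U_i(c\mid c)=-a(c)$ rather than $-\overline{A}(c)$.

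Concretely, take $\overline{c_i}=1$ and $a(s)=\mathbf{1}\{s\le z\}$, i.e.\ one provider facing a threshold rule, which is monotone. For $b<c<z$ your formula becomes $U(b\mid c)=(z-c)+(c-b)(1-\overline{A}(b))$, with $\overline{A}(b)=1-\mu\frac{1-z}{1-b}<1$. So underbidding strictly pays. With $z=\mu=\tfrac12$, $c=\tfrac14$, $b=\tfrac1{10}$ one gets $U(b\mid c)=\tfrac{7}{24}>\tfrac14=U(c\mid c)$.

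\textbf{Why this happens.} The rebate $\frac1\mu(\overline{c_i}-b_i)\mathcal{ALG}_i(\widetilde b)$ is an unbiased estimate of $\int_{b_i}^{\overline{c_i}}a(s)\,ds$, the integral of the \emph{unresampled} rule. Myerson's identity for the randomized mechanism instead needs $\int_{b_i}^{\overline{c_i}}\overline{A}(s)\,ds$. The paper's proof of Part~1 glosses over exactly this point. It replaces $\mathcal{A}_i(b_{-i},\widetilde{b_i})$, which it has just defined as the expected allocation over the resampling, by $\mathcal{ALG}_i(\widetilde b)$. So neither your argument nor the paper's establishes EPIC for \textsc{Rev-GTM} with the one-shot uniform \ourrealgo, and no sharper inequality will, since the counterexample shows the claim is false for this mechanism.

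\textbf{What is missing.} The procedure lacks the self-reducibility property of the Babaioff--Kleinberg--Slivkins construction. There, the rebate weight $1/F'(y_i,b_i)$ is evaluated at a first resampling point $y_i$, while allocation uses a point $x_i$ that, conditional on $y_i=u$, is distributed as the procedure's output on input $u$. One way to get this is a recursive version: with probability $\mu$, draw $y_i\sim U[b_i,\overline{c_i}]$ and let $x_i$ be the output of the same procedure run on input $y_i$; keep the same payment formula. Then $\mathbb{E}[R_i]=\int_{b_i}^{\overline{c_i}}\overline{A}(s)\,ds$ with $\overline{A}$ non-increasing, and your first route, matching the Myerson form, goes through.
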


\begin{proof}
We begin by stating a theorem which helps to estimate integrals via random sampling. 
\begin{theorem}[Theorem 4.2 from \cite{babaioff2015truthful}]
\label{thm:int_est}
Let $\mathcal{I}$ be an non-empty open interval in $\mathbb{R}$ and $F : \mathcal{I} \to [0, 1]$ be any strictly increasing function that is differentiable and satisfies 
\[
\inf_{z \in \mathcal{I}} F(z) = 0 \quad \text{and} \quad \sup_{z \in \mathcal{I}} F(z) = 1.
\]
If $Y$ is a random variable with cumulative distribution function $F$, then
\[
\int_\mathcal{I} g(z) \, dz = \mathbb{E} \!\left[ \frac{g(Y)}{F'(Y)} \right].
\]
\end{theorem}

\textbf{Part (1)}
The argument follows the structure of~\cite{babaioff2015truthful}.  
To establish truthfulness, it suffices to show:  
(i) the allocation rule $\mathcal{A}$ is monotone, and  
(ii) the payment rule satisfies Myerson's characterization.  

Monotonicity of $\mathcal{A}$ follows from the monotonicity of $\mathcal{ALG}$ and Property~1 of the reverse self-resampling procedure, which ensures that increasing a bid cannot reduce the probability of allocation.  

Note when we write $\mathcal{A}$, we consider the expected allocation over the $rs$ used to alter the bids. 

We have from Myerson's Characterization. 
\begin{align*}
    \tilde{p}_i(b) &= c_i \mathcal{A}_i(b_i) + \int_{c_i}^{\overline{c_i}}\mathcal{A}_i(b_{-i}, u)du 
\end{align*}

The payment in the mechanism $\mathcal{M}$ is:
\begin{align*}
    p_i(b) &= b_i \mathcal{ALG}_i(\widetilde{b}) + R_i \text{, where} \\
    R_i &= \begin{cases} 
\frac{1}{\mu} \mathcal{ALG}_i(\widetilde{b}) \left( \overline{c_i} - c_i \right) & \text{if } \xi > 1 \\
0 & \text{if } \xi = 1
\end{cases}
\end{align*}

We have to show that the expected payment of our mechanism matches Myerson's Characterization. 

\[
\mathbb{E}[p_i(b)] = \mathbb{E}[b_i\mathcal{ALG}_i(\widetilde{b})] + \mathbb{E}[R_i]
% = b_i\mathcal{A}_i(b) + \mathbb{E}[R_i].
\]
We have:
\[
\mathbb{E}[b_i \mathcal{ALG}_i(\widetilde{b})] = b_i \mathcal{A}_i(b)
\]

We need to show that, 
\begin{align*}
    \mathbb{E}[R_i] &= \int_{c_i}^{\overline{c_i}} \mathcal{A}_i(b_{-i}, u) \, du
\end{align*}

Using Theorem~\ref{thm:int_est} with random variable $X = \widetilde{b_i}$ and function $g(u) = \mathcal{A}_i(b_{-i}, u)$, we get
\begin{align*}
    \int_{b_i}^{\overline{c_i}} \mathcal{A}_i(b_{-i}, u) \, du &= \mathbb{E}\left[ \frac{\mathcal{A}_i(b_{-i}, \widetilde{b_i}(b_i; rs_i))}{F'(\widetilde{b_i}(b_i; rs_i), b_i)} \ \middle| \ \overline{c_i} > \widetilde{b_i}>b_i\right] \\
    &= \mathbb{E}\left[ \frac{\mathcal{ALG}_i(\widetilde{b})}{F'(\widetilde{b_i}, b_i)} \ \middle| \ \overline{c_i} > \widetilde{b_i}>b_i\right] \\
    &= \mathbb{E}\left[ (\overline{c_i} - b_i)\mathcal{ALG}_i(x)\ \middle| \ \overline{c_i} > \widetilde{b_i}>b_i\right] \\
    &= \mu \cdot \mathbb{E}\left[ R_i\ \middle| \ \overline{c_i} > \widetilde{b_i} > b_i\right] \\
    &= \mu \cdot \mathbb{E}\left[ R_i\right] \\
\end{align*}

The last line follows from the fact that $R_i$ is non-zero only when the bid is resampled. 

\textbf{Part (2)}
We can observe from the algorithm that the bids are resampled only with probability $\mu$. So, the allocations won't alter if none of the bids are altered. And the probability of this event would be $1-M \mu$

\textbf{Part(3)}
One can observe from the payment structure that the reverse self-resampling procedure satisfies ex-post no-positive-transfer. 
\end{proof}

% MERGE THE PRROF SKETCH AND PROOF

% \begin{proofsketch}
% \textbf{(1)} The argument follows the structure of~\cite{babaioff2015truthful}.  
% It suffices to show:  
% (i) the allocation rule $\mathcal{A}$ is monotone, and  
% (ii) the payment rule satisfies Myerson's characterization.  

% Monotonicity of $\mathcal{A}$ follows from the monotonicity of $\mathcal{ALG}$ and Property~1 of the reverse self-resampling procedure, which ensures that increasing a bid cannot reduce the probability of allocation.  

% For the payment, the expected payment of provider $i$ is  
% \[
% \mathbb{E}[p_i(b)] = \mathbb{E}[b_i\mathcal{ALG}_i(\widetilde{b})] + \mathbb{E}[R_i]
% = b_i\mathcal{A}_i(b) + \mathbb{E}[R_i].
% \]
% Hence, to match Myerson’s characterization, it remains to show
% \[
% \mathbb{E}[R_i] = \int_{c_i}^{\overline{c_i}} \mathcal{A}_i(b_{-i},u)\,du.
% \]
% This equality holds because the random resampling process, unbiasedly estimates the integral term in Myerson’s formula.  

% Individual rationality follows directly since the expected utility of each provider is non-negative under truthful reporting and monotone allocation.  

% \textbf{(2)} Since each bid is resampled independently with probability $\mu$, the probability that no bids are resampled equals $1 - M\mu$.  

% \textbf{(3)} Finally, ex-post no-positive-transfer holds directly from the payment rule.
% \end{proofsketch}

%%%%%%%%%%%%%%%%%%%%%%%%%%%%%%%%%%%%%%%%%%%%%%%%%%%%%%%%%%%%%%%%%%%%%%%

\subsection*{\ourmabalgo}

We propose that the user $\mathfrak{U}$  deploys the auction as follows: \textit{First}, it collects the bids from all the LLM providers and updates them through \ourrealgo. \textit{Second}, with updated bids, for every new query, it selects the provider recommended by \oursupucb\ and the payments are computed as given \textsc{Rev-GTM}. We refer to such an auction as \ourmabalgo.
In summary, \ourmabalgo = {\textsc{Rev-GTM}}(\oursupucb, $\mu$, $\mathbf{RP}$).

\section{\ourmabalgo: Theoretical Analysis}
\label{sec:thy_ana}
%%%%%%%%%%%%%%%%%%%%%%%%%%%%%%%%%%%%%%%%%%%%%%%%%%%%%%%%%%%%%%%%%%%%%%%%
We now prove that \oursupucb\ is monotone and hence from Theorem~\ref{thm:truthful_epic_epir}, \ourmabalgo\ is truthful reverse contextual MAB auction that optimizes the user's utility.

\subsection{Monotonicity of \ourmabalgo}

\begin{theorem}
\label{thm:mono}
    The allocation rule induced by \oursupucb\ is ex-post monotone.
\end{theorem}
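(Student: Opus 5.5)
We fix an arbitrary context sequence $x_1,\dots,x_T$, an arbitrary table of reward realizations (one realization $r_{i,\tau}$ for every provider and round), the bids $b_{-i}$, and two bids $b_i'\le b_i$ of provider $i$. We then compare the two runs of \oursupucb\ round by round. Regularity makes $\Psi_i$ strictly increasing, so $\Psi_i(b_i')\le\Psi_i(b_i)$. Hence in every comparison of the form $(\hat v^s_{a,t}+w^s_{a,t})-\Psi_a(b_a)$, lowering the bid only raises provider $i$'s score and leaves every other score unchanged.

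The key structural step is to show that the learning state never depends on the bids. The index sets $\Lambda^{s}_{i,t}$ change only in the forced-exploration branch, and that branch is triggered by the round-robin index $j$ together with the widths $w^s_{j,t}$. The widths depend only on the contexts stored in $\Lambda^s_{j,t}$. In the exploitation branches the selected round is not added to any $\Lambda^{s'}_{\cdot,t}$, since it goes only to $\Lambda_0$ or $\Lambda^s_{est}$. By induction on $t$, we would show that $\{\Lambda^s_{a,t}\}$, and therefore $\hat\theta_{a,t}$, $\hat v^s_{a,t}$ and $w^s_{a,t}$ for all $a,s$, are identical in the two runs.

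The hard part is the active sets $\hat A_s$, because they are bid-dependent. We would prove by induction on the stage $s$ within a round that if $i\in\hat A_s$ under $b_i$, then $i\in\hat A_s$ under $b_i'$, and that every other provider active under $b_i'$ is also active under $b_i$. The elimination rule keeps $a$ iff its score is at least the maximum score minus $2\cdot 2^{1-s}$. Lowering $i$'s bid raises $i$'s score, which can only help $i$ survive, but it may raise the maximum and eliminate others. The latter is harmless for $i$, and for the inductive claim it only makes the competitor set smaller. The branching conditions, namely the forced-exploration test and the "all widths small" tests, depend on which providers are in $\hat A_s$. This is the delicate point. The forced-exploration test involves only $j$ and its width. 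The width tests over $\hat A_s$ are monotone in set inclusion, but the two active sets differ in opposite directions for $i$ and for the others. We would handle this by a case analysis, showing that whenever $i$ is selected under $b_i$ it is also selected under $b_i'$, which gives the per-round inequality $\mathbf 1\{I_t=i\}$ under $b_i'$ $\ge$ $\mathbf 1\{I_t=i\}$ under $b_i$.

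We work through the case split by how $i$ wins under $b_i$. \emph{Case (a):} $i$ wins by forced exploration ($j=i$, wide $w$). The bid-independent widths and the retention of $i$ in $\hat A_s$ give the same outcome at the same stage, provided no earlier stage ends differently. \emph{Case (b):} $i$ wins by an argmax in an exploitation branch. Under $b_i'$, $i$ is still present with a higher score while competitors have unchanged or fewer entries, so $i$ remains the argmax. The obstacle is showing that the run under $b_i'$ reaches a branch of the same type at the same stage, rather than, for instance, a forced exploration of some $j\ne i$ that is active only under $b_i$, or the reverse. We would first try to argue that forced exploration of $j$ occurs at the first stage $s$ where $w^s_{j,t}>2^{-s}$ and $j$ is active. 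If this fails, we would restrict the statement to the per-round conditional form used in \cite{babaioff2015truthful} and verify that ex-post monotonicity is only needed for cumulative allocation counts. Summing the per-round inequalities over $\tau\le t$ then gives $\mathcal A_i(b_i',t)\ge\mathcal A_i(b_i,t)$.
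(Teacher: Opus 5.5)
There is a genuine gap. It sits at the step you leave open, and the step you treat as routine, bid-independence of the learning state, depends on it.

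\textbf{The two preliminary invariants.} The forced-exploration test is ``$j\in\hat A_s$ and $w^s_{j,t}>2^{-s}$''. It is evaluated at a stage $s$ that the run reaches only through bid-dependent eliminations and the bid-dependent test ``$w^s_{a,t}\le 2^{-s}$ for all $a\in\hat A_s$''. So whether, and at which stage, $t$ is added to some $\Lambda^s_{j}$ is not bid-independent. The event you mention yourself, a forced exploration of some $j\neq i$ that is active only under $b_i$, adds $t$ to $\Lambda^s_{j}$ in one run only, and your induction on $t$ breaks.

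Your two-sided active-set invariant does not propagate either, because the scores $\hat v^s+w^s-\Psi$ are stage-specific. A provider $c$ removed at stage $1$ only under $b_i'$ can hold the maximum at stage $2$ under $b_i$ and eliminate some $a$ that survives under $b_i'$. That $a$ can then eliminate $i$ at stage $3$ under $b_i'$ while $i$ survives under $b_i$. Explicit stage-wise scores realize this even when every branch is a stage advancement, so learning is unaffected. Hence neither ``others active under $b_i'$ are active under $b_i$'' nor ``$i$ active under $b_i$ implies $i$ active under $b_i'$'' holds.

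\textbf{The branch-alignment obstacle is real.} Take a round $t$ with designated $j\neq i$ in which both runs advance at stage $1$ (common $\hat A_1=[M]$, all $w^1_{a,t}\le 1/2$). Suppose the boost $\Psi_i(b_i)-\Psi_i(b_i')$ raises the maximum enough that some $c\notin\{i,j\}$ is eliminated only under $b_i'$. Let $w^2_{c,t}>1/4$ while the remaining stage-$2$ widths lie in $(1/\sqrt T,1/4]$. Then:
\begin{itemize}
\item under $b_i$, stage $2$ is ``exploitation within the current stage'' and can pick $i$ as argmax;
\item under $b_i'$, the run advances to stage $3$ and, if $j$ survives and $w^3_{j,t}>1/8$, force-explores $j$.
\end{itemize}
So $i$ loses round $t$ by lowering its bid, and if the runs agree before $t$, then $\mathcal A_i(b_i',t)<\mathcal A_i(b_i,t)$. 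Your per-round inequality is false for the algorithm as written. The fallback to cumulative counts does not help, since the definition is already cumulative and your summation still needs the per-round inequality.

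\textbf{Comparison with the paper.} The paper's proof uses your skeleton: bid-independent learning, active-set monotonicity, argmax monotonicity. It asserts full inclusion $\hat A_k(b,t)\subseteq\hat A_k(b^-,t)$, deriving $\mathrm{OVS}_j(b_j^-)\ge\max_{\hat A_k(b^-,\tau)}\mathrm{OVS}_a(b_a^-)-2^{1-k}$ from $\mathrm{OVS}_j(b_j^-)\ge\max_{\hat A_k(b,\tau)}\mathrm{OVS}_a(b_a)-2^{1-k}$ and $\max_{\hat A_k(b^-,\tau)}\ge\max_{\hat A_k(b,\tau)}$. That inference runs the wrong way; you were right that lowering $b_i$ can eliminate others. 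The paper also never addresses which branch each run takes, so it does not supply the missing idea. A valid argument needs a modified procedure: the exploration decision and its stage should depend only on the round-robin index and the widths, not on $\hat A_s$. It also needs an exploitation rule under which the event $\{I_t=i\}$ is monotone in $i$'s own score, which nested elimination with stage-specific estimates does not give.
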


\begin{proof}
We consider two bid vectors, 
\[
b = (b_i, b_{-i}) \quad \text{and} \quad b^- = (b_i^-, b_{-i}),
\]
where \(b_i^- < b_i\) and all other bids \(b_{-i}\) are fixed. The analysis assumes a fixed, arbitrary sequence of contexts \(\{x_t\}_{t=1}^T\) and their corresponding reward realizations.

\begin{claim}[Bid-Independence of Learning]
The value estimates \(\hat{v}_{j,t}^k\) and confidence widths \(w_{j,t}^k\) computed by the \oursupucb\  for any provider \(j\) at any round \(t\) and stage \(k\) are independent of the bid vector.
\end{claim}

\begin{proof}
\oursupucb\ computes its outputs using only the index sets \(\Lambda_{j,t}^s\), the sequence of contexts \(x_\tau\), and rewards \(r_{j,\tau}\) for \(\tau \in \Lambda_{j,t}^s\). In the main REV-SupLinUCB-S-OPT algorithm, the index sets are updated (i.e., learning occurs) only when a provider $i$ is selected during the forced exploration phase (Line [10]) or after an exploitation choice (Line [14]). The forced exploration selection is based on a bid-independent round-robin schedule \(j \gets 1 + (t \bmod n)\). 

\end{proof}

\begin{claim}[Monotonicity of Active Sets]
For any two bid vectors \(b\) and \(b^-\) as defined above, for all rounds \(t \in \{1, \ldots, T\}\) and all stages \(k \in \{1, \ldots, K_{\max}\}\), the active sets satisfy the subset property:
\[
\hat{A}_k(b, t) \subseteq \hat{A}_k(b^-, t).
\]
This claim states that a provider is always active under a more competitive (lower) bid if it is active under a less competitive (higher) bid.
\end{claim}

\begin{proof}[Proof by Induction]

\textbf{Base Case (\(t = 1\)):} At the beginning of round \(t = 1\), the algorithm initializes. 

For the first stage:
\[
\hat{A}_1(b, 1) = M \quad \text{and} \quad \hat{A}_1(b^-, 1) = M.
\]
Thus, \( \hat{A}_1(b, 1) \subseteq \hat{A}_1(b^-, 1) \) holds trivially.

For subsequent stages \(k > 1\) at \(t = 1\), the active sets are determined by elimination steps, following the same logic as the inductive step. Hence, the base case holds.

\vspace{1em}
\textbf{Inductive Hypothesis:} Assume the claim holds for all rounds up to \(\tau - 1\), i.e.,
\[
\forall\, t' < \tau,\, \forall\, k: \quad \hat{A}_k(b, t') \subseteq \hat{A}_k(b^-, t').
\]

\vspace{1em}
\textbf{Inductive Step: Prove for round \(t = \tau\):} We prove by nested induction on stage \(k\) that
\[
\forall\, k: \quad \hat{A}_k(b, \tau) \subseteq \hat{A}_k(b^-, \tau).
\]

\vspace{1em}
\underline{Inner Base Case (\(k = 1\)):} At the start of round \(\tau\),
\[
\hat{A}_1(b, \tau) = \mathbb{N}, \quad \hat{A}_1(b^-, \tau) = \mathbb{N}.
\]

\vspace{1em}
\underline{Inner Inductive Hypothesis:} Assume \( \hat{A}_k(b, \tau) \subseteq \hat{A}_k(b^-, \tau) \) for some stage \(k\).

\vspace{1em}
\underline{Inner Inductive Step (Prove for stage \(k+1\)):}

Let \(j \in \hat{A}_{k+1}(b, \tau)\). By definition of the elimination rule (Line 16), this implies:

1. \(j \in \hat{A}_k(b, \tau)\)

2. \(j\) satisfies the survival condition:
\[
(\hat{v}_{j,\tau}^k + w_{j,\tau}^k) - \Psi_j(b_j) \ge \max_{a \in \hat{A}_k(b, \tau)} \left\{ (\hat{v}_{a,\tau}^k + w_{a,\tau}^k) - \Psi_a(b_a) \right\} - 2^{1-k}.
\]

From the inner inductive hypothesis, \(j \in \hat{A}_k(b^-, \tau)\) as well. Now we show that \(j\) satisfies the survival condition under \(b^-\).

\vspace{1em}
\textbf{Learning Independence:} The history sets \(\Lambda_{j,\tau}^k\) are identical under \(b\) and \(b^-\) due to the bid-independent forced exploration rule. Thus,
\[
\hat{v}_{j,\tau}^k \text{ and } w_{j,\tau}^k \text{ are identical under both } b \text{ and } b^-.
\]

\vspace{1em}
\textbf{Virtual Cost Regularity:}
\begin{itemize}
    \item For all \(j \ne i\): \(\Psi_j(b_j^-) = \Psi_j(b_j)\)
    \item For \(j = i\): since \(b_i^- < b_i\), we have \( \Psi_i(b_i^-) < \Psi_i(b_i) \)
\end{itemize}

Define:
\[
\mathrm{OVS}_a(b_a) = (\hat{v}_{a,\tau}^k + w_{a,\tau}^k) - \Psi_a(b_a).
\]

The survival condition becomes:
\[
\mathrm{OVS}_j(b_j^-) \ge \max_{a \in \hat{A}_k(b^-, \tau)} \mathrm{OVS}_a(b_a^-) - 2^{1-k}.
\]

We compare the two sides:

\textbf{LHS:} Since \(\Psi_j(b_j^-) \le \Psi_j(b_j)\), we have
\[
\mathrm{OVS}_j(b_j^-) \ge \mathrm{OVS}_j(b_j).
\]

\textbf{RHS:} The set \( \hat{A}_k(b, \tau) \subseteq \hat{A}_k(b^-, \tau) \) and \( \mathrm{OVS}_a(b_a^-) \ge \mathrm{OVS}_a(b_a) \) for all \(a\). Therefore,
\[
\max_{a \in \hat{A}_k(b^-, \tau)} \mathrm{OVS}_a(b_a^-) \ge \max_{a \in \hat{A}_k(b, \tau)} \mathrm{OVS}_a(b_a).
\]

Hence,
\begin{align*}
\mathrm{OVS}_j(b_j^-) &\ge \mathrm{OVS}_j(b_j) \\
&\ge \max_{a \in \hat{A}_k(b, \tau)} \mathrm{OVS}_a(b_a) - 2^{1-k} \\
&\Rightarrow \mathrm{OVS}_j(b_j^-) \ge \max_{a \in \hat{A}_k(b^-, \tau)} \mathrm{OVS}_a(b_a^-) - 2^{1-k}.
\end{align*}

Thus, \(j \in \hat{A}_{k+1}(b^-, \tau)\), proving the inner inductive step.

\vspace{1em}
By induction on both \(t\) and stage \(k\), we conclude that
\[
\forall\, t, \forall\, k: \quad \hat{A}_k(b, t) \subseteq \hat{A}_k(b^-, t).
\]
\end{proof}

\begin{claim}[Monotonicity of Selection for provider \(i\)]
In any round \(t\) and stage \(k\) where an exploitation choice is made, provider \(i\)'s optimistic virtual surplus is strictly higher with bid \(b_i^-\) than with bid \(b_i\), making it a more attractive candidate for selection.
\end{claim}

\begin{proof}
The selection rule during exploitation is
\[
\arg\max_{a \in \hat{A}_k} \mathrm{OVS}_a(b_a).
\]
From Claim 1, the terms \(\hat{v}_{i,t}^k\) and \(w_{i,t}^k\) are independent of the bid. From Regularity condition, we have, \(\Psi_i(b_i^-) < \Psi_i(b_i)\). Therefore,
\[
\mathrm{OVS}_i(b_i^-) > \mathrm{OVS}_i(b_i).
\]
Since the optimistic virtual surplus for all other providers \(j \neq i\) remains unchanged, provider \(i\) is weakly more likely to be the argmax with the lower bid \(b_i^-\).
\end{proof}

The Claim 2 and Claim 3 prove that by submitting a lower bid \( b_i^- \), provider \( i \) ensures it will remain in the active set for at least as many rounds and stages as it would have with the higher bid \( b_i \).

During any exploitation step, the provider with the highest optimistic virtual surplus (OVS) is chosen. As \( \text{OVS}_i(b_i^-) > \text{OVS}_i(b_i) \) while \( \text{OVS}_j \) for \( j \neq i \) remains unchanged, provider \( i \) is strictly more competitive when it lowers its bid.

Since a lower bid guarantees that an provider remains eligible for selection for at least as long and makes it a more competitive choice in every selection round, the total number of allocations \( \mathcal{A}_i(b_i^-, T) \) must be greater than or equal to \( \mathcal{A}_i(b_i, T) \). This completes the proof.

\end{proof}

\subsection{Regret analysis of \oursupucb}
%%%%%%%%%%%%%%%%%%%%%%%%%%%%%%%%%%%%%%%%%%%%%%%%%%%%%%%%%%%%%%%%%%%%%%%%
In this subsection, we prove that \oursupucb\ incurs the regret $O(\sqrt{T})$. The proof is similar to SupLinUCB as our algorithm is derived from it. However, it has certain differences as we need to ensure monotonicity. Still, the regret dependency on $T$ remains the same, albeit, constant factors increase (for \oursupucb, dependency on number of providers is quadratic in $M$, whereas for SupLinUCB, it is $\sqrt{M})$.

The difference in the final regret proof is due to (i) rewards are also bid dependent, which is not the case with \cite{chu2011contextual}. (ii) We need monotonicity, and hence, our active sets retain the providers in the active set for more rounds than that in SupLinUCB.  (iii) We need to have an optimal reverse auction
(iv) To claim the regret guarantees, we need the following lemma.

\begin{lemma}
\label{lemma:regret}
With probability at least $1 - \kappa$, for any round $t \in [T]$ and any stage $s \in [S_{\max}]$, the following hold:

\begin{enumerate}
    \item For all $i \in [M]$,
    \[
    (\hat{v}_{\,i,t}^s + w_{i,t}^s) - \Psi_i(b_i) 
    \;\ge\;
    v_{i, t} - \Psi_i(b_i)
    \;\ge\;
    (\hat{v}_{\,i,t}^s - w_{i,t}^s) - \Psi_i(b_i).
    \]
    
    \item The optimal provider is never eliminated: $i_t^* \in \hat{A}_s$.
    
    \item For any $i \in \hat{A}_s$,
    \[
    \big(v_{i_t^*, t} - \Psi_{i_t^*}(b_{i_t^*})\big)
    - 
    \big(v_{i,t} - \Psi_i(b_i)\big)
    \le 2^{3-s}.
    \]
\end{enumerate}
\end{lemma}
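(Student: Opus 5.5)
The plan follows the standard SupLinUCB analysis of \citet{chu2011contextual}, adapted to per-provider parameters $\theta_i$ and to the net reward $v_{i,t}-\Psi_i(b_i)$. The proof proceeds in three steps: a per-stage concentration bound, a union bound over all indices, and an induction over stages for Parts 2 and 3.

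\textbf{Step 1: concentration within one stage.} Part 1 is the key step; Parts 2 and 3 then follow deterministically on the good event. Since $\Psi_i(b_i)$ is a fixed, known quantity, it cancels from all three expressions in Part 1. It therefore suffices to show $|\hat v^s_{i,t}-\theta_i^\top x_t|\le w^s_{i,t}=\alpha\sqrt{x_t^\top (A^s_{i,t})^{-1}x_t}$.

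We would invoke the argument of Lemma 1 in \citet{chu2011contextual}. Conditioned on the contexts $\{x_\tau\}_{\tau\in\Lambda^s_{i,t}}$, the rewards $r_{i,\tau}$ indexed by $\Lambda^s_{i,t}$ are independent with means $\theta_i^\top x_\tau$. Then $\hat\theta_{i,t}-\theta_i$ splits into a ridge-bias term, bounded by $\|\theta_i\|\,\|x_t\|_{A^{-1}}$, and a martingale/Azuma term. Choosing $\alpha=\sqrt{\tfrac12\ln(2TMS_{\max}/\kappa)}$ makes the per-index failure probability at most $\kappa/(TMS_{\max})$.

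The main obstacle is the required independence. It holds in SupLinUCB because membership of $\tau$ in stage $s$ is decided using only $w^{s'}_{\cdot,\tau}$, which depend on contexts and not on rewards, together with estimates from lower stages that use disjoint index sets. In our algorithm, additions to $\Lambda^s_{i,t}$ occur only in the forced-exploration branch. That branch is triggered by the round-robin index $j$ and the width test $w^s_{j,t}>2^{-s}$. Reaching stage $s$, however, depends on eliminations at stages $s'<s$, which use rewards from $\Lambda^{s'}$ only. The exploitation branches do not modify $\Lambda$, as Claim 1 in the proof of Theorem~\ref{thm:mono} notes. So the stage-$s$ rewards remain conditionally independent given the lower-stage history, and the Chu et al.\ argument applies verbatim. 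Claim 1 also guarantees that the bids do not affect the index sets, so the bound holds uniformly over bids.

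\textbf{Step 2: union bound.} We take a union bound over $i\in[M]$, $t\in[T]$ and $s\in[S_{\max}]$. This yields Part 1 with probability at least $1-\kappa$.

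\textbf{Step 3: Parts 2 and 3 by induction on $s$.} We work on the good event and induct on the stage $s$. For $s=1$, $\hat A_1=[M]$, so Part 2 holds trivially. Part 3 at $s=1$ holds if net rewards lie in a range of width at most $4$, a normalization we would state.

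For the step, assume $i^*_t\in\hat A_s$. An advancement from stage $s$ to $s+1$ happens only when $w^s_{a,t}\le 2^{-s}$ for all $a\in\hat A_s$. Part 1 then gives, for every $a\in\hat A_s$,
\[
\mathrm{OVS}_{i^*_t}\ge v_{i^*_t,t}-\Psi_{i^*_t}\ge v_{a,t}-\Psi_a\ge \mathrm{OVS}_a-2w^s_{a,t}\ge \mathrm{OVS}_a-2\cdot 2^{-s}.
\]
This is within the elimination margin $2\cdot2^{1-s}$, so $i^*_t$ survives, giving Part 2.

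For any survivor $i\in\hat A_{s+1}$, we use $\mathrm{OVS}_i\ge \mathrm{OVS}_{i^*_t}-2\cdot 2^{1-s}$ together with Part 1 for both providers:
\[
(v_{i^*}-\Psi_{i^*})-(v_i-\Psi_i)\le \mathrm{OVS}_{i^*}-(\mathrm{OVS}_i-2w^s_{i,t})\le 4\cdot2^{-s}+2\cdot2^{-s}\le 2^{3-(s+1)}.
\]
Part 3 holds at stage $s+1$ only if this slack fits. Since $6\cdot 2^{-s}$ exceeds $2^{2-s}=4\cdot 2^{-s}$, the algorithm's margin may need to be read so that the constant closes; this bookkeeping of constants is the point we would verify carefully.
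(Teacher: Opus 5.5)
\textbf{The gap is in Part~3, and rereading the constants will not close it.} Your inductive step correctly gives, for $i\in\hat A_{s+1}$,
$(v_{i^*_t,t}-\Psi_{i^*_t}(b_{i^*_t}))-(v_{i,t}-\Psi_i(b_i))\le 2\cdot 2^{1-s}+2\cdot 2^{-s}=6\cdot 2^{-s}$.
The final ``$\le 2^{3-(s+1)}$'' in your display is false, and nothing in Algorithm~\ref{alg:suplinucb} repairs it. Take $\mathrm{OVS}_{i^*_t}$ maximal with $\hat v^s_{i^*_t,t}=v_{i^*_t,t}-w^s_{i^*_t,t}$. Take $i$ with $w^s_{i,t}=2^{-s}$, $\hat v^s_{i,t}=v_{i,t}+w^s_{i,t}$, and $\mathrm{OVS}_i$ exactly $2\cdot 2^{1-s}$ below the maximum. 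This configuration is consistent with Part~1 and passes the survival test, yet its gap is exactly $6\cdot 2^{-s}>2^{3-(s+1)}$. So Part~3 with $2^{3-s}$ for $i\in\hat A_s$ cannot be derived from Part~1 and the margin $2\cdot 2^{1-s}$.

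The paper reaches $2^{3-s}$ only through an index slip. It applies the stage-$s$ survival inequality, which certifies membership in $\hat A_{s+1}$ rather than $\hat A_s$, together with $w^s_{i,t}\le 2^{1-s}$. Made consistent, that computation is exactly yours, and it proves $2^{3-s}$ for $\hat A_{s+1}$, i.e.\ $2^{4-s}$ for $\hat A_s$.

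To finish, you must change either the statement or the algorithm:
\begin{itemize}
\item Prove Part~3 with $2^{4-s}$. For $s\ge 2$, $i\in\hat A_s$ survived stage $s-1$, so the gap is at most $2^{3-s}+2\cdot 2^{1-s}\le 2^{4-s}$. This only turns the factor $8$ into $16$ in the proof of Theorem~\ref{thm:regret}.
\item Use the SupLinUCB margin $2^{1-s}$. Then $2^{1-s}+2\cdot 2^{-s}=2^{3-(s+1)}$ closes exactly, and your Part~2 argument still goes through because $2\cdot 2^{-s}\le 2^{1-s}$.
\end{itemize}
In either version, the base case $s=1$ needs an explicit bound on the range of $v_{i,t}-\Psi_i(b_i)$. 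The paper never addresses this, so it must be an added assumption, as you suspected.

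\textbf{Parts~1 and~2 are otherwise sound.} Part~2 is the paper's argument, but with consistent stage indices (stage-$s$ widths at most $2^{-s}$ against the stage-$s$ margin), which the paper's version lacks. For Part~1 the paper only cites Lemma~15 of Auer and subtracts $\Psi_i(b_i)$. Your concentration, independence and union-bound argument is what actually justifies the probability $1-\kappa$. Your observation that a round enters $\Lambda^s_{j,t}$ only through forced exploration, decided by widths and by eliminations at stages below $s$, is the right adaptation of the independence lemma of Chu et al. Two repairs are needed:
\begin{itemize}
\item The paper's width is $\alpha\sqrt{x_t^\top A_{i,t}^{-1}x_t}$, without the ``$+1$'' of Chu et al. So the ridge-bias term $\|\theta_i\|\,\|x_t\|_{A_{i,t}^{-1}}$ you isolate is not covered. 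Fix this by taking $w^s_{i,t}=(\alpha+1)\sqrt{x_t^\top A_{i,t}^{-1}x_t}$ with $\|\theta_i\|\le 1$, or by enlarging $\alpha$ by a bound on $\|\theta_i\|$.
\item Drop the appeal to Claim~1. It fails for $s\ge 2$: exploring $j$ at stage $s$ requires $j\in\hat A_s$, which depends on $\Psi$ and hence on the bids. You do not need it, because the resampled bids are independent of the rewards and you can simply condition on them.
\end{itemize}
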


We begin my stating some lemmas from other papers, that trivially follow for \oursupucb. 

%%%%%%%%%%%%%%%%%%%%%%%%%%%%%%

We begin with the following lemma from the seminal paper of UCB.
\begin{lemma}[Lemma 15, \cite{auer2003confidence}]
\label{lem:ucb_vit}
For any $t$ and any stage $s$,
\[
ucb_{i,t}^{s} - 2 \cdot w_{i, t}^{s} \le v_{i,t} \le ucb_{i,t}^{s} \quad \text{for any } i
\]
where, $ucb_{i,t}^{s} = \hat{v}_{i,t}^{s} + w_{i,t}^{s}$
\end{lemma}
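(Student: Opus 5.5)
The statement is the per-provider confidence bound $|\hat{v}_{i,t}^{s}-v_{i,t}|\le w_{i,t}^{s}$. This is the same as $ucb_{i,t}^{s}-2w_{i,t}^{s}\le v_{i,t}\le ucb_{i,t}^{s}$ with $ucb_{i,t}^{s}=\hat{v}_{i,t}^{s}+w_{i,t}^{s}$. As in \cite{auer2003confidence,chu2011contextual}, it should be read as holding simultaneously for all $i,t,s$ with probability at least $1-\kappa/S_{\max}$ (or $1-\kappa$). This requires $\alpha$ in \ourlinucb\ of order $\sqrt{\tfrac12\ln(2TMS_{\max}/\kappa)}+\max_i\|\theta_i\|$, with rewards in $[0,1]$ and $\|x_t\|\le 1$.

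\textbf{Error decomposition.} Fix a provider $i$, a stage $s$ and a round $t$. Write $\Lambda=\Lambda_{i,t}^{s}$, $A=A_{i,t}$ and $\eta_\tau=r_{i,\tau}-\theta_i^\top x_\tau$. From $\hat\theta_{i,t}=A^{-1}\sum_{\tau\in\Lambda} r_{i,\tau}x_\tau$ and $A=I_d+\sum_{\tau\in\Lambda}x_\tau x_\tau^\top$, I obtain
\[
\hat{v}_{i,t}^{s}-v_{i,t}=\sum_{\tau\in\Lambda} x_t^\top A^{-1}x_\tau\,\eta_\tau \;-\; x_t^\top A^{-1}\theta_i .
\]
The bias term is bounded by Cauchy--Schwarz and $A\succeq I_d$:
\[
|x_t^\top A^{-1}\theta_i|\le \|\theta_i\|\,\|A^{-1}x_t\|\le \|\theta_i\|\sqrt{x_t^\top A^{-1}x_t}.
\]
For the noise term I will use $\sum_{\tau\in\Lambda}(x_t^\top A^{-1}x_\tau)^2=x_t^\top A^{-1}\big(\sum_{\tau} x_\tau x_\tau^\top\big)A^{-1}x_t\le x_t^\top A^{-1}x_t$. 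Then, provided the $\eta_\tau$ for $\tau\in\Lambda$ are independent zero-mean bounded variables conditional on the contexts $\{x_\tau\}_{\tau\in\Lambda}$, Hoeffding's inequality gives
\[
\Big|\sum_{\tau\in\Lambda} x_t^\top A^{-1}x_\tau\eta_\tau\Big|\le \sqrt{\tfrac12\ln(2TMS_{\max}/\kappa)}\,\sqrt{x_t^\top A^{-1}x_t}
\]
with probability at least $1-\kappa/(TMS_{\max})$. Adding the two bounds gives $|\hat v-v|\le w_{i,t}^s$ for this choice of $\alpha$. A union bound over $i,t,s$ then finishes the argument.

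\textbf{Main obstacle: justifying conditional independence.} I would argue as in Lemma~14 of \cite{chu2011contextual}, and show that membership of a round $\tau$ in $\Lambda_{i,t}^{s}$ depends only on contexts and on rewards from index sets of stages $s'<s$, never on rewards in stage $s$ itself. In \oursupucb\ a round is added to $\Lambda^{s}$ only in the forced-exploration branch. That branch fires when the round-robin index $j=1+(t\bmod n)$ lies in $\hat A_s$ and $w_{j,t}^{s}>2^{-s}$. The width $w_{j,t}^s$ depends only on the contexts in $\Lambda_{j,t}^s$. The set $\hat A_s$ is determined by eliminations at stages $s'<s$, which use $\hat v^{s'}$ computed from $\Lambda^{s'}$, and these sets are disjoint from $\Lambda^{s}$. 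Exploitation rounds are recorded in $\Lambda_0$ and $\Lambda^{s}_{est}$ and never enter the regression sets. Bids enter only through $\Psi$ in the elimination and exploitation rules, and the bids are fixed, or resampled independently of rewards. Hence, conditional on the context sequence and on the rewards in lower-stage sets, the rewards indexed by $\Lambda_{i,t}^{s}$ are independent with means $\theta_i^\top x_\tau$.

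I expect this bookkeeping to be the delicate step. I would formalize it by induction on $s$, carrying out the same conditioning as Chu et al.\ separately for each provider's $\theta_i$.
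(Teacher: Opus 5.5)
Your proof is correct. It also supplies an argument the paper does not give: the paper imports the lemma verbatim from \cite{auer2003confidence}, and separately lists the independence lemma of \cite{chu2011contextual} (its ``Lemma 4''), without connecting the two or checking that either survives the changes made in \oursupucb. Your route is the Chu et al.\ ridge-regression argument. You bound the bias term $x_t^\top A^{-1}\theta_i$ by $\|\theta_i\|\sqrt{x_t^\top A^{-1}x_t}$, control the noise term by Hoeffding, and get independence by stage-wise conditioning. This fits the estimator actually used in Algorithm~\ref{alg:baselinucb} better than Auer's LinRel-based lemma does.

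What the citation buys is brevity. What your version buys is that it exposes the conditions the lemma really needs.
\begin{enumerate}
\item It is only a high-probability statement. The paper's version has no qualifier, although Lemma~\ref{lemma:regret} does.
\item The width must absorb the regularization bias and the union bound over stages, so $\alpha$ must be of order $\sqrt{\tfrac12\ln(2TMS_{\max}/\kappa)}+\max_i\|\theta_i\|$. Algorithm~\ref{alg:baselinucb} instead uses width $\alpha\sqrt{x_t^\top A^{-1}x_t}$ with Theorem~\ref{thm:regret}'s $\alpha=\sqrt{\tfrac12\ln(2TM/\kappa)}$. That choice drops both the $+1$ that Chu et al.\ include (under $\|\theta\|\le 1$) and the $S_{\max}$ factor.
\item The independence structure genuinely transfers. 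Only the forced-exploration branch writes to the regression sets $\Lambda^s_{i,t}$. Whether it fires at stage $s$ depends only on the round-robin index, the stage-$s$ widths (contexts), and $\hat A_s$ (lower-stage rewards and the fixed bids). Both exploitation branches write only to $\Lambda_0$ or $\Lambda^s_{\mathrm{est}}$.
\end{enumerate}
Your Hoeffding step also needs bounded (or sub-Gaussian) rewards. You state this, but the paper never does.
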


The following lemmas from Chu et al.~\cite{chu2011contextual} for SupLinUCB are valid of \oursupucb.

\begin{lemma}[Lemma 2, \cite{chu2011contextual}]
For each $s \in [S]$ and $i \in \mathcal{N}$, suppose $\Lambda^s_{i,t+1} = \Lambda^s_{i,t} \cup \{t\}$. 
Then, eigenvalues of $A_{i,t}$ can be arranged so that $\lambda^j_{i,t} \le \lambda^j_{i,t+1}$, 
for all $j$ and
\[
s^2_{i,t} \le 10 \sum_{j=1}^{d} \frac{\lambda^j_{i,t+1} - \lambda^j_{i,t}}{\lambda^j_{i,t}}.
\]
\end{lemma}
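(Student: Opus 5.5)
Throughout, I read $s_{i,t}$ as the unscaled width $s_{i,t}=\sqrt{x_t^\top A_{i,t}^{-1}x_t}$, so that $w_{i,t}^s=\alpha s_{i,t}$ in \ourlinucb. I also assume the normalization used by \citet{chu2011contextual}, namely $\|x_t\|\le 1$. The hypothesis $\Lambda^s_{i,t+1}=\Lambda^s_{i,t}\cup\{t\}$ means that the design matrix of provider $i$ at stage $s$ receives exactly one rank-one update:
\[
A_{i,t+1}=A_{i,t}+x_tx_t^\top .
\]
This holds because $A_{i,t}=I_d+\sum_{\tau\in\Lambda^s_{i,t}}x_\tau x_\tau^\top$. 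The plan is to get both claims purely from linear algebra on this update. Nothing bandit-specific is needed, which is why the lemma carries over verbatim from SupLinUCB to \oursupucb: only the per-provider, per-stage index sets change, and the update rule is the same.

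\textbf{Eigenvalue ordering.} Arrange the eigenvalues of $A_{i,t}$ and of $A_{i,t+1}$ in non-decreasing order. Since $x_tx_t^\top\succeq 0$, Weyl's monotonicity inequality (equivalently, Courant--Fischer) gives $\lambda^j_{i,t}\le\lambda^j_{i,t+1}$ for every $j$. Because $A_{i,t}\succeq I_d$, all $\lambda^j_{i,t}\ge 1>0$, so each ratio
\[
\delta_j:=\frac{\lambda^j_{i,t+1}-\lambda^j_{i,t}}{\lambda^j_{i,t}}
\]
is well defined and satisfies $\delta_j\ge 0$.

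\textbf{Determinant identity.} The matrix determinant lemma gives
\[
\det A_{i,t+1}=\det A_{i,t}\,\bigl(1+x_t^\top A_{i,t}^{-1}x_t\bigr)=\det A_{i,t}\,(1+s^2_{i,t}).
\]
Writing both determinants as products of eigenvalues yields
\[
1+s_{i,t}^2=\prod_{j=1}^d(1+\delta_j)\le \exp\Bigl(\sum_{j=1}^d\delta_j\Bigr),
\]
and hence $\sum_j\delta_j\ge\ln(1+s^2_{i,t})$.

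\textbf{Linearizing the logarithm.} This is the only delicate step. Since $A_{i,t}\succeq I_d$ and $\|x_t\|\le1$, we have $s^2_{i,t}\le\|x_t\|^2\le 1$. By concavity of the logarithm, $\ln(1+y)\ge y\ln 2\ge y/2$ for $y\in[0,1]$. Therefore
\[
s_{i,t}^2\le 2\sum_j\delta_j\le 10\sum_j\delta_j,
\]
which is the claimed bound with room to spare.

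The main obstacle is the boundedness $s^2_{i,t}\le 1$. If the paper's contexts are not normalized, I would first rescale the contexts, absorbing the scale into $\theta_i$. Alternatively, I would use $\ln(1+y)\ge y/(1+y)$ together with $s^2_{i,t}\le\|x_t\|^2$, which gives a constant depending on the norm bound. This explains why the stated constant is the loose value $10$ rather than $2$.
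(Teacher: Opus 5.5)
Your proof is correct, but it takes a different route from the one the paper relies on.

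The paper gives no argument for this lemma. It only asserts that Lemma~2 of Chu et al.\ remains valid for \oursupucb. That lemma is inherited from Auer's LinRel analysis, which works directly with how the rank-one update perturbs the individual eigenvalues of $A_{i,t}$. The constant $10$ is an artifact of that argument.

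You instead use the matrix determinant lemma, $1+s_{i,t}^2=\prod_j(1+\delta_j)$, which holds for any pairing of the eigenvalues. You then apply $1+\delta\le e^{\delta}$ and $\ln(1+y)\ge y\ln 2$ on $[0,1]$; this is the standard elliptical-potential argument. It is shorter and self-contained, and it gives the sharper constant $1/\ln 2$. Unlike the paper's bare assertion, it also shows exactly why the lemma transfers: you use only the single rank-one update of provider $i$'s stage-$s$ Gram matrix, $A_{i,t}\succeq I_d$, and $\|x_t\|\le 1$.

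Two of your closing remarks need adjusting.
\begin{enumerate}
\item The normalization does not explain the constant $10$, since Chu et al.\ already assume $\|x_t\|\le 1$.
\item Rescaling the contexts is not a free reparametrization. The regularizer $I_d$ does not rescale, so rescaling changes the algorithm whose matrices the lemma is about.
\end{enumerate}

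The norm bound is genuinely needed. Take $A_{i,t}=\mathrm{diag}(1,a)$ (reachable from one earlier context $(0,\sqrt{a-1})$) and $x_t=(a,\epsilon)$ with small $\epsilon\neq 0$. The interlacing is then strict, so the sorted pairing is the only admissible one. Yet $s_{i,t}^2\approx a^2$ while $\sum_j\delta_j\approx 2a$, so no fixed constant works.

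So $\|x_t\|\le 1$, which the paper never states, should be an explicit hypothesis. Alternatively, use your $\ln(1+y)\ge y/(1+y)$ variant, which gives the constant $1+L^2$ when $\|x_t\|\le L$.
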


\begin{lemma}[Lemma 3, \cite{chu2011contextual}]
Using notation in \textit{BaseLinUCB-S} and assuming $|\Lambda^s_{i,T+1}| \ge 2$, we have
\[
\sum_{t \in \Lambda^s_{i,T+1}} s_{i,t} \le 5 \sqrt{d |\Lambda^s_{i,T+1}| \ln |\Lambda^s_{i,T+1}|}.
\]
\end{lemma}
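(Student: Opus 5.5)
The plan is to adapt the elliptical-potential argument of Chu et al.\ to a single provider $i$ and a single stage $s$. Fix $i$ and $s$, and write $\Lambda = \Lambda^s_{i,T+1}$ with $n = |\Lambda| \ge 2$. Here $s_{i,t} = \sqrt{x_t^\top A_{i,t}^{-1} x_t}$ is the unscaled confidence width. The decisive structural fact is that in \oursupucb\ the index set $\Lambda^s_{i,\cdot}$ grows by exactly one element, $\{t\}$, each time $t$ is added at stage $s$ for provider $i$. Between consecutive additions the matrix $A_{i,\cdot}$ built from $\Lambda^s_{i,\cdot}$ does not change. Hence along $\Lambda$ the matrices form the sequence $I_d, I_d + x_{t_1}x_{t_1}^\top, \dots$ of rank-one updates, and the hypothesis of Lemma 2 of \cite{chu2011contextual} holds at every $t\in\Lambda$.

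The argument then has three steps.

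\textbf{Step 1 (Cauchy--Schwarz).}
\[
\sum_{t\in\Lambda} s_{i,t} \le \sqrt{n \sum_{t\in\Lambda} s_{i,t}^2 }.
\]

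\textbf{Step 2 (telescoping via Lemma 2).} Applying Lemma 2 at each $t\in\Lambda$ gives
\[
\sum_{t\in\Lambda} s^2_{i,t} \le 10 \sum_{t\in\Lambda}\sum_{j=1}^d \frac{\lambda^j_{i,t+1}-\lambda^j_{i,t}}{\lambda^j_{i,t}}.
\]
Since the eigenvalues are arranged to be nondecreasing and change only at indices in $\Lambda$, each inner sum telescopes against $\ln$. Use $(y-x)/x \le$ a constant times $\ln(y/x)$, which is valid because $y/x \le 2$. This ratio bound holds because $\|x_t\|\le 1$ and every $\lambda \ge 1$, so $\lambda^j_{i,t+1} \le \lambda^j_{i,t}+1 \le 2\lambda^j_{i,t}$. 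The result is
\[
\sum_{t\in\Lambda} s^2_{i,t} \le C\sum_{j=1}^d \ln \lambda^j_{i,T+1}.
\]

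\textbf{Step 3 (AM--GM on the eigenvalues).} Bound the log-determinant using $\sum_j \lambda^j_{i,T+1} = \mathrm{tr}(A_{i,T+1}) \le d + n$:
\[
\sum_j \ln\lambda^j_{i,T+1} \le d\ln(1+n/d).
\]
Combining the three steps yields $\sum_{t\in\Lambda} s_{i,t} \le \sqrt{C' d\, n \ln n}$.

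The main obstacle is matching the constant $5$ exactly while controlling $d\ln(1+n/d)$ by $d\ln n$ times a constant. For the constant I would follow Chu et al.'s bookkeeping, using $\ln(1+x)\ge x/2$ on $[0,1]$ together with the factor $10$ from Lemma 2. The case $n<d$ and small $n$ needs a short separate check. There one uses the trivial bound $s_{i,t}\le 1$, so the sum is at most $n \le 5\sqrt{dn\ln n}$ once $n\ge 2$. The only point genuinely specific to our algorithm is Step 2's prerequisite. I must verify that, for fixed $(i,s)$, the index-set updates in Lines 13 and 17 of \oursupucb\ never insert a round at stage $s$ other than the one whose width was measured at stage $s$. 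The exploitation branches leave $\Lambda^{s'}$ unchanged, so this holds, and the proof then reduces verbatim to Lemma 3 of \cite{chu2011contextual} with independent $\theta_i$ per provider.
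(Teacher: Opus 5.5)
Your proposal is correct and follows essentially the same route as the source the paper relies on: the paper gives no proof and simply invokes Lemma~3 of \cite{chu2011contextual}. Your argument reconstructs that proof (Lemma~2, Cauchy--Schwarz, log-eigenvalue telescoping, and the trace/AM--GM bound), and you correctly add the one algorithm-specific check that matters here: $\Lambda^s_{i,\cdot}$ grows only by the forced-exploration round at stage $s$.

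A minor remark concerns the case split you flag. It is needed only for $d=1$, $n\in\{2,3\}$; the case $n<d$ is harmless. You can drop the split entirely by using $u\le \ln(1+u)/\ln 2$ on $[0,1]$ instead of $u\le 2\ln(1+u)$, since then $\tfrac{10}{\ln 2}\ln(1+n)\le 25\ln n$ holds for all $n\ge 2$.
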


\begin{lemma}[Lemma 4, \cite{chu2011contextual}]
For each $s \in [S]$, each $t \in [T]$, and any fixed sequence of feature vectors $x_t$, 
with $t \in \Lambda^s_{I_t,t}$, the corresponding rewards $v_{I_t,t}$ are independent random variables such that 
$\mathbb{E}[v_{I_t,t}] = \theta_i^{\top} x_t$.
\end{lemma}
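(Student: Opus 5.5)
The plan is to follow the argument behind Lemma~4 of \cite{chu2011contextual} and adapt it to \oursupucb. Throughout, the bid vector (hence every $\Psi_i(b_i)$), the context sequence $\{x_t\}$, the resampled bids and the round-robin schedule $j = 1 + (t \bmod n)$ are treated as fixed. The claim then reduces to a measurability statement. Fix a stage $s$. The event $\{t \in \Lambda^s_{I_t,t+1}\}$, and the identity of $I_t$ on that event, must be determined by $\{x_\tau\}$, the bids, and rewards $r_{\cdot,\tau}$ with $\tau$ in lower-stage index sets $\Lambda^{s'}$, $s' < s$. In particular, they must not depend on the rewards of rounds already placed in $\Lambda^s$, nor on $r_{I_t,t}$ itself. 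Once this holds, the reward at round $t$ is a fresh draw given the selection event. By the model assumption $\mathbb{E}[r_{i,t}\mid x_t] = \theta_i^\top x_t$, it therefore has mean $\theta_{I_t}^\top x_t$, and distinct rounds in $\Lambda^s$ contribute independent draws.

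\textbf{Step 1: widths are reward-free.} By lines 4 and 9 of Algorithm~\ref{alg:baselinucb}, $A_{i,t}$ and hence $w^s_{i,t} = \alpha\sqrt{x_t^\top A_{i,t}^{-1}x_t}$ depend only on the contexts indexed by $\Lambda^s_{i,t}$, and not on any reward. This is the same observation as Claim~1 in the proof of Theorem~\ref{thm:mono}, where learning was shown to be bid-independent; here it is strengthened to reward-independence of the widths.

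\textbf{Step 2: the active set $\hat A_s$ uses only lower-stage rewards.} I would argue by inner induction on the stage index at round $t$. The base case is $\hat A_1 = [M]$, which is deterministic. Stage advancement from $s'$ to $s'+1$ uses $\hat v^{s'}_{a,t}$, which involves rewards in $\Lambda^{s'}_{a,t}$, together with widths and the fixed virtual costs. Hence $\hat A_s$ is a function of the rewards in $\bigcup_{s'<s}\Lambda^{s'}$ only.

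\textbf{Step 3: each round enters at most one stage set, and only via forced exploration.} The only update that adds a round to some $\Lambda^s$ is the forced-exploration branch. That branch fires at stage $s$ iff $j \in \hat A_s$ and $w^s_{j,t} > 2^{-s}$. Whether the earlier stages passed through without firing is likewise decided by widths and lower-stage active sets. The remaining branches do not touch any $\Lambda^{s'}$ used by REV-BaseLinUCB-S: the pure-exploitation branch leaves them unchanged, and the within-stage exploitation branch writes only to the unused sets $\Lambda_0$ and $\Lambda^s_{\text{est}}$. Their choices, which depend on $\hat v$, therefore cannot feed back into membership of $\Lambda^s$. It follows that $t \in \Lambda^s$ is determined by rewards of strictly earlier rounds lying in lower-stage sets. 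On that event $I_t = j$ is deterministic, and the observed reward $r_{j,t}$ is independent of everything that determined its inclusion. Applying this round by round and conditioning sequentially gives mutual independence of the rewards in $\Lambda^s$, each with the stated mean.

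The main obstacle is Step 3: ruling out leakage through the exploitation branches. In SupLinUCB the selected arm's round is always recorded, whereas here the bookkeeping of $\Lambda_0$ and $\Lambda^s_{\text{est}}$ differs. I would first check carefully that these sets never enter any BaseLinUCB call. If they did, I would restrict the statement to forced-exploration rounds, which is all that is needed downstream.
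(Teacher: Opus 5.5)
Your proposal is correct. It does not follow the paper's route, because the paper has none: the paper lists this lemma among the results of \cite{chu2011contextual} that it asserts remain valid for \oursupucb, and gives no argument.

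You instead re-run the SupLinUCB measurability argument against the modified branch structure. Widths are reward-free, and $\hat A_s$ depends only on estimates from stages below $s$. The forced-exploration test at stage $s$ reads only widths, $\hat A_s$ and the deterministic round-robin index $j$. The two exploitation branches, whose arm choice does use $\hat v^s$, never write to an index set passed to \ourlinucb.

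The bare citation buys brevity. Your route buys a genuine check of exactly the features that distinguish \oursupucb\ from SupLinUCB. It also exposes two things the transfer depends on:
\begin{enumerate}
\item The forced-exploration update must add $t$ only to $\Lambda^s_{j,t+1}$; the display writes a generic $i$.
\item The pure-exploitation update must be the no-op that the algorithm displays. It must not be the ``learning \ldots\ after an exploitation choice (Line [14])'' described in Claim~1 of the proof of Theorem~\ref{thm:mono}. That choice depends on $\hat v^s$, so recording it would make membership in $\Lambda^s$ depend on stage-$s$ rewards and invalidate the lemma.
\end{enumerate}

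Two points to tighten:
\begin{enumerate}
\item The mean should be $\theta_{I_t}^\top x_t$; the statement's $\theta_i$ is a slip.
\item Your ``conditioning sequentially'' step is cleanest as a reward-table argument. Fix a set $P$ of rounds. Your Step~3 induction shows that the event ``stage $s$ records exactly the rounds in $P$'' is a function of the contexts, the bids, and the rewards of rounds outside $P$. By independence across rounds, conditioning on this event leaves $\{r_{j_t,t} : t\in P\}$ independent with their unconditional means, where $j_t$ is the round-robin provider of round $t$.
\end{enumerate}
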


\begin{lemma}[Lemma 6, \cite{chu2011contextual}]
For all $s \in [S]$ and $i \in \mathcal{N}$,
\[
|\Lambda^{s}_{i,T+1}| \le 5 \cdot 2^{s}(1 + \alpha^2)\sqrt{d|\Lambda^{s}_{i,t+1}|}.
\]
\end{lemma}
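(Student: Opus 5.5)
The plan is to follow the counting argument behind Lemma~6 of \citet{chu2011contextual}, adapted to the per-provider index sets of \oursupucb. Fix a stage $s \in [S_{\max}]$ and a provider $i$. The first step is to identify exactly when a round $t$ enters $\Lambda^s_{i,\cdot}$.

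By inspection of Algorithm~\ref{alg:suplinucb}, the only line that enlarges $\Lambda^{s}_{i,\cdot}$ is the forced-exploration branch, and it does so at the current stage $s$. The pure-exploitation branch leaves all $\Lambda$'s unchanged, and the within-stage exploitation branch writes to $\Lambda^s_{est}$, not to $\Lambda^s_{i,\cdot}$. The forced-exploration branch fires for $i$ at stage $s$ only when $i=j$ is the round-robin provider and $w^s_{i,t} > 2^{-s}$. Hence every $t \in \Lambda^s_{i,T+1}$ satisfies
\[
2^{-s} < w^s_{i,t} = \alpha\, s_{i,t}, \qquad s_{i,t} := \sqrt{x_t^\top A_{i,t}^{-1} x_t},
\]
where $A_{i,t}$ is built from $\Lambda^s_{i,t}$ as in Algorithm~\ref{alg:baselinucb}. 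The same argument (bid-independence of the $\Lambda$'s) is already used in Claim~1 of Theorem~\ref{thm:mono}, and I would cite it there.

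Summing this inequality over $t \in \Lambda^s_{i,T+1}$ gives
\[
2^{-s}\,|\Lambda^s_{i,T+1}| \;\le\; \alpha \sum_{t \in \Lambda^s_{i,T+1}} s_{i,t}.
\]
If $|\Lambda^s_{i,T+1}| \le 1$, the claimed bound is trivial, since its right-hand side is at least $5\cdot 2^s\sqrt{d}\ge 1$. Otherwise Lemma~3 applies, because the sets $\Lambda^s_{i,\cdot}$ are built exactly as in BaseLinUCB-S: each added $t$ has its design matrix formed only from earlier members of the same set. This yields
\[
2^{-s}\,|\Lambda^s_{i,T+1}| \le 5\alpha\sqrt{d\,|\Lambda^s_{i,T+1}|\,\ln|\Lambda^s_{i,T+1}|}.
\]

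The remaining step, which I expect to be the only delicate one, is to absorb the $\alpha\sqrt{\ln|\Lambda|}$ factor into $(1+\alpha^2)$. I would use the standard choice of the confidence parameter from Lemma~15 of \cite{auer2003confidence} / \citet{chu2011contextual}, namely $\alpha = \sqrt{\tfrac12\ln(2TM/\kappa)}$. Then $\ln|\Lambda^s_{i,T+1}| \le \ln T \le 2\alpha^2$, so
\[
\alpha\sqrt{\ln|\Lambda|} \le \tfrac12\bigl(\alpha^2 + \ln|\Lambda|\bigr) \le \tfrac32\alpha^2.
\]
This already gives $|\Lambda^s_{i,T+1}| \le c\cdot 2^s(1+\alpha^2)\sqrt{d\,|\Lambda^s_{i,T+1}|}$ with an absolute constant $c$. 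To reach exactly the constant $5$, I would sharpen the estimate as in Chu et al., replacing Lemma~3 by the $\ln$-free form of Lemma~2 summed via Cauchy--Schwarz and a determinant bound, and otherwise accept a slightly larger constant. Either way the dependence on $2^s$, $\alpha$, $d$ and $|\Lambda|$ matches the statement.

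Dividing by $\sqrt{|\Lambda^s_{i,T+1}|}$ then yields the familiar consequence $|\Lambda^s_{i,T+1}| \le 25\cdot 4^s(1+\alpha^2)^2 d$ used later in the regret bound.
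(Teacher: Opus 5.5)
The paper gives no proof of this lemma. It imports it from \cite{chu2011contextual} and asserts that it remains valid for \oursupucb. Your structural check is what that assertion needs, and it is correct: only the forced-exploration branch enlarges $\Lambda^s_{i,\cdot}$, always at the current stage and with $w^s_{i,t}=\alpha s_{i,t}>2^{-s}$, and each set grows by appending the current round, so the elliptical-potential bound (Lemma~3 of \cite{chu2011contextual}) applies. Reading the right-hand side as $|\Lambda^s_{i,T+1}|$ is also right; with $t+1$ taken literally, the inequality fails at $t=0$ whenever $\Lambda^s_{i,T+1}\neq\emptyset$.

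The gap is the constant. Your chain ends at $|\Lambda^s_{i,T+1}|\le\tfrac{15}{2}\cdot 2^s(1+\alpha^2)\sqrt{d|\Lambda^s_{i,T+1}|}$, and the repair you sketch does not work. Summing Lemma~2 with Cauchy--Schwarz only gives $\sum_t s_{i,t}\le\sqrt{10|\Lambda|\sum_j\sum_t(\lambda^j_{i,t+1}-\lambda^j_{i,t})/\lambda^j_{i,t}}$, and the double sum is bounded only by $O(d\ln|\Lambda|)$. That is how Lemma~3 is proved in the first place, and the $\sqrt{\ln}$ factor is unavoidable for such sums, already for $d=1$. Nor can you finish from the inequality you get after Lemma~3. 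You would need $\alpha\sqrt{\ln|\Lambda^s_{i,T+1}|}\le 1+\alpha^2$, but that inequality does not prevent $\ln|\Lambda^s_{i,T+1}|$ from being close to $\ln T\approx 2\alpha^2$, where the left side is about $\sqrt2\,\alpha^2$.

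The loss happens when you turn a count into a sum of widths, so count directly with the log-determinant potential instead. Write $\Lambda=\Lambda^s_{i,T+1}$. Every $t\in\Lambda$ satisfies $4^{-s}\alpha^{-2}<s_{i,t}^2\le 1$; the upper bound comes from $A_{i,t}\succeq I_d$ and $\|x_t\|\le 1$, a normalization already implicit in the imported Lemmas~2--3. The matrix determinant lemma, $\ln(1+u)\ge u\ln 2$ on $[0,1]$, and $1+T\le 2TM/\kappa$ then give
\[
\frac{|\Lambda|\ln 2}{4^{s}\alpha^{2}}\;\le\;\sum_{t\in\Lambda}\ln\bigl(1+s_{i,t}^{2}\bigr)\;=\;\ln\det\Bigl(I_d+\sum_{\tau\in\Lambda}x_\tau x_\tau^{\top}\Bigr)\;\le\;d\ln\Bigl(1+\frac{|\Lambda|}{d}\Bigr)\;\le\;2d\alpha^{2}.
\]
Hence $|\Lambda|\le\tfrac{2}{\ln 2}\,4^s\alpha^4 d\le 25\cdot 4^s(1+\alpha^2)^2 d$, i.e.\ $\sqrt{|\Lambda|}\le 5\cdot 2^s(1+\alpha^2)\sqrt d$. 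Multiplying by $\sqrt{|\Lambda|}$ gives the stated bound with constant $5$, with room to spare.

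Replace your Lemma~3 step and the AM--GM absorption with this, and the rest of your proposal stands. The weaker constant would not affect the $O(\sqrt T)$ rate, but as written your argument proves a weaker inequality than the one stated.
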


\begin{proof}

\noindent\textbf{Part 1:}  

From~\ref{lem:ucb_vit}, we have
\[
\hat{v}_{i,t}^{s} + w_{i,t}^{s} - 2 \cdot w_{i, t}^{s} \le v_{i,t} \le \hat{v}_{i,t}^{s} + w_{i,t}^{s} \quad \text{for any } i
\]

Subtracting $\Psi_i(b_i)$, i.e.,
\[
\hat{v}_{i,t}^{s} + w_{i,t}^{s} - 2 \cdot w_{i, t}^{s} - \Psi_i(b_i)\le v_{i,t} - \Psi_i(b_i)\le \hat{v}_{i,t}^{s} + w_{i,t}^{s} - \Psi_i(b_i) \quad \text{for any } i
\]

\[
    (\hat{v}_{\,i,t}^s + w_{i,t}^s) - \Psi_i(b_i) 
    \;\ge\;
    v_{i, t} - \Psi_i(b_i)
    \;\ge\;
    (\hat{v}_{\,i,t}^s - w_{i,t}^s) - \Psi_i(b_i).
    \]

\medskip
\noindent\textbf{Part 2:}  
The lemma trivially holds for $s = 1$.  
For $s > 1$, we have $\hat{A}_s \subseteq \hat{A}_{s-1}$, and from the algorithm as it advances from stage $s-1$ to $s$, it is clear that
\[
    w_{i,t}^s \le 2^{-(s-1)} 
    \quad \text{and} \quad 
    w_{i_t^*(x_t),t}^s \le 2^{-(s-1)}.
\]

From part 1 of the the lemma we have
\[
    (\hat{v}_{\,i_t^*}^{s-1} + w_{i_t^*}^{s-1}) - \Psi_{i_t^*}(b_{i_t^*}) 
    \;\ge\;
    v_{i_t^*} - \Psi_{i_t^*}(b_{i_t^*}).
    \]
and for any $j \in \hat{A}_s$, using Part 1 of the lemma and the upper bound on $w_{i, t}^s$, we have
\[
    v_{i, t} - \Psi_i(b_i)
    \;\ge\;
    (\hat{v}_{\,i,t}^s + w_{i,t}^s) - \Psi_i(b_i) - 2 \cdot  w_{i,t}^s.
    \;\ge\;
    (\hat{v}_{\,i,t}^s + w_{i,t}^s) - \Psi_i(b_i) - 2 \cdot  2^{1-s}
    \]
From the definition,
\[
    v_{i_t^*(x_t),t}^s - \Psi_{i_t^*}(b_{i_t^*}) \ge v_{\,i,t}^s - \Psi_i(b_i).
\]
Using this and the above inequalities, we get: 
\[ (\hat{v}_{\,i_t^*}^s + w_{i_t^*}^s) - \Psi_{i_t^*}(b_{i_t^*}) \geq \max_{a \in \hat{A}_s} \big\{ (\hat{v}_{a,t}^s + w_{a,t}^s) - \Psi_a(b_a) \big\} - 2 \cdot 2^{1-s} \big\}\]
which is the Line[21] from the Algorithm~\ref{alg:suplinucb}.

Hence, the provider $i_t^*(x_t)$ will belong to $\hat{A}_s$ for all $s$ (i.e., it will never be eliminated for context $x_t$) due to the rule defined in Line[21], Algorithm~\ref{alg:suplinucb}.

\medskip
\noindent\textbf{Part 3:}  
From Line[21] of Algorithm~\ref{alg:suplinucb}, we have
\[ (\hat{v}_{i,t}^s + w_{i,t}^s) - \Psi_i(b_i) \geq (\hat{v}_{\,i_t^*}^s + w_{i_t^*}^s) - \Psi_{i_t^*}(b_{i_t^*}) - 2 \cdot 2^{1-s} \]

Using part 1 of the lemma, for provider $i$ we have: 
\[
(\hat{v}_{\,i,t}^s - w_{i,t}^s) - \Psi_i(b_i) \leq v_{i, t} - \Psi_i(b_i).
\]
Adding $2 \cdot w_{i,t}^s$ to each side
\[
(\hat{v}_{\,i,t}^s + w_{i,t}^s) - \Psi_i(b_i) \leq v_{i, t} - \Psi_i(b_i) + 2 \cdot w_{i,t}^s.
\]
Using part 1 of the lemma, for optimal provider $i_t^*$ we have: 
\[
    (\hat{v}_{\,i_t^*}^{s} + w_{i_t^*}^{s}) - \Psi_{i_t^*}(b_{i_t^*}) 
    \;\ge\;
    v_{i_t^*} - \Psi_{i_t^*}(b_{i_t^*}).
    \]

Substituting these values, 
\[ v_{i, t} - \Psi_i(b_i) + 2 \cdot w_{i,t}^s \geq v_{i_t^*} - \Psi_{i_t^*}(b_{i_t^*}) - 2 \cdot 2^{1-s} \]

Rearranging the terms, we get: 
\[ (v_{i_t^*} - \Psi_{i_t^*}(b_{i_t^*})) - (v_{i, t} - \Psi_i(b_i)) \leq 2 \cdot 2^{1-s} + 2 \cdot w_{i,t}^s\]

We can simplify the RHS as follows:
\begin{align*}
    2 \cdot 2^{1-s} + 2 \cdot w_{i,t}^s &= 2^{2-s} + 2\cdot 2^{1-s} \\
    &= 2 \cdot 2^{2-k} \\
    &= 2^{3-k}
\end{align*}

Hence, 
\[ (v_{i_t^*} - \Psi_{i_t^*}(b_{i_t^*})) - (v_{i, t} - \Psi_i(b_i)) \leq 2^{3-k} \]
\end{proof}

% MERGE THIS PROOF SKETCH 

% % \begin{proofsketch}
% The first part follows directly from Lemma~4 of~\citet{chu2011contextual} by subtracting the virtual cost term from both sides of the inequality.  
% For the second part, we use the confidence bound $w_{i,t}^s \le 2^{-(s-1)}$, which follows from the algorithm’s update rule. Combining this bound with Part~(1) implies that the optimal provider never satisfies the elimination criterion, and thus remains in the active set.  
% Finally, Part~(3) applies Part~(1) and the elimination condition to bound the instantaneous regret at round $t$.
% \end{proofsketch}

We restrict the learning during round-robin ordering only (Lines 10-13, Algorithm~\ref{alg:suplinucb}), to which we add an additional decision rule for provider selection (Line 27). 
Note that, this additional rule was not used in~\citet{chu2011contextual}. 
Hence, the main challenge is to bound the number of rounds of provider selection, which is done using this decision rule. (We refer to it as $\Lambda_{est}^s$ in our analysis.)
To this extent, we introduce some notations also mentioned in the algorithm: 

Let $\Lambda_0$ be the set of rounds in which the provider in pure exploitation phase was selected (Lines [14-17]). 
Let $\Lambda_{\mathrm{est}}^s$ be the set of rounds the provider was selected in exploitation in current phase (Lines [25-27]), and let $\Lambda_{T+1}^s = \bigcup_i \Lambda_{i,T+1}^s$.

The expression in Claim~\ref{claim} will be useful to bound the regret.
\begin{claim}\label{claim}
At each stage $s$, $|\Lambda^{s}_{\text{est}}| \le (n - 1) \cdot |\Lambda^{s}_{T+1}|$.
\end{claim}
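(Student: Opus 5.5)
The plan is a charging argument: each round in $\Lambda^{s}_{\text{est}}$ will be charged to a round in $\Lambda^{s}_{T+1}$, with every round of $\Lambda^{s}_{T+1}$ receiving at most $n-1$ charges. The relevant structure of \oursupucb\ is the round-robin pointer $j = 1 + (t \bmod n)$. A round $t$ ends in stage $s$ in one of two ways.

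\textbf{Forced exploration at stage $s$.} Here the designated provider $j$ lies in $\hat{A}_s$ and satisfies $w^s_{j,t} > 2^{-s}$. Then $t$ is added to $\Lambda^s_{j,t+1}$, hence to $\Lambda^s_{T+1}$.

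\textbf{Exploitation within stage $s$ (Lines 25--27).} This branch fires when the designated $j$ either is not in $\hat{A}_s$ or has $w^s_{j,t}\le 2^{-s}$, while some other active provider still has $w^s_{i,t} > 2^{-s}$ (otherwise the stage-advancement or pure-exploitation branch would fire). So every $t\in\Lambda^{s}_{\text{est}}$ has an active provider $i \neq j$ whose stage-$s$ width is still large, yet $i$ is not explored at $t$ only because the pointer is elsewhere.

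The key step is to show that such a provider $i$ is explored within the next $n-1$ rounds, and to charge $t$ to that exploration round. Fix $t \in \Lambda^s_{\text{est}}$ and such an $i$. Let $t' > t$ be the next round with $1+(t' \bmod n) = i$, so $t'-t\le n-1$. By Claim~1, $w^s_{i,\cdot}$ depends only on $\Lambda^s_{i,\cdot}$, which changes only when $i$ itself is force-explored at stage $s$. Hence $w^s_{i,t'} = w^s_{i,t} > 2^{-s}$, provided no intervening round added to $\Lambda^s_i$. Since only the pointer holder can be explored and the pointer does not return to $i$ before $t'$, this proviso holds.

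I then need to argue that round $t'$ reaches stage $s$ with $i$ still active, so that it executes the forced exploration of $i$ and lands in $\Lambda^s_{T+1}$. This is the main obstacle: different contexts $x_{t'}$ can change both the widths and the eliminations at earlier stages. I would first try to avoid it with a pigeonhole/cyclic argument. Partition the rounds into consecutive blocks of length $n$. Within a block, every stage-$s$ exploitation round is witnessed by a "starved" provider with large width. I would then aim to show that each full cycle of the pointer producing $\Lambda^{s}_{\text{est}}$ rounds also produces at least one stage-$s$ exploration round; this gives at most $n-1$ est-rounds per exploration round, i.e.\ $|\Lambda^{s}_{\text{est}}| \le (n-1)|\Lambda^{s}_{T+1}|$.

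If exact context dependence blocks this, I would fall back on treating widths as fixed between explorations, as the surrounding analysis implicitly does via Lemma~6 of \cite{chu2011contextual}. The injectivity of the charge (at most $n-1$ preceding rounds map to one exploration round) then finishes the count.
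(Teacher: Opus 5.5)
\textbf{The gap.} The failing step is ``by Claim~1, $w^s_{i,\cdot}$ depends only on $\Lambda^s_{i,\cdot}$, hence $w^s_{i,t'}=w^s_{i,t}>2^{-s}$.'' Claim~1 only says the estimates are independent of the \emph{bids}. The width $w^s_{i,t}=\alpha\sqrt{x_t^\top A_{i,t}^{-1}x_t}$ is evaluated at the current context. So a provider that is starved at round $t$ can have a small width at its own turn $t'$, and the eliminations below stage $s$ can also differ at $t'$.

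You flag this afterwards, but neither remedy can work, because the claim is false for admissible context sequences. Take $n=M=2$, $d=2$, any $\alpha>1/2$ and $T\ge 4$. Let $x_t=e_1$ for odd $t$ and $x_t=e_2$ for even $t$.
\begin{itemize}
\item The pointer is provider $2$ on odd rounds and provider $1$ on even rounds. A provider is explored only on its own turns, so the stage-$1$ Gram matrices are $\mathrm{diag}(1+k_2,1)$ for provider $2$ and $\mathrm{diag}(1,1+k_1)$ for provider $1$, where $k_i$ counts explorations.
\item Hence on odd rounds $w^1_{2,t}=\alpha/\sqrt{1+k_2}$ while $w^1_{1,t}=\alpha$, and symmetrically on even rounds.
\item Once $k_1,k_2\ge 4\alpha^2-1$, i.e.\ after $O(\alpha^2)$ rounds, the pointer's width is always $\le 1/2$. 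The other provider's width is $\alpha>1/2\ge 1/\sqrt{T}$.
\item So forced exploration, pure exploitation and stage advancement all fail, and every remaining round lands in $\Lambda^1_{\mathrm{est}}$. Est rounds do not update any history, so this persists to the end.
\end{itemize}
Thus $|\Lambda^1_{T+1}|=O(\alpha^2)=O(\ln(T/\kappa))$ for the paper's $\alpha$, while $|\Lambda^1_{\mathrm{est}}|=T-O(\ln(T/\kappa))$. Your pigeonhole target (``every pointer cycle containing est rounds also contains a stage-$s$ exploration'') fails for every cycle after the first few.

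\textbf{Relation to the paper.} The paper's own proof has the same hole. It argues that $M$ consecutive rounds cannot all be est rounds because the starved provider ``should be selected on its designated round.'' That is exactly the assumption that a large width persists across rounds with different contexts. So your first route would reproduce the paper's argument together with its flaw.

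Your fallback, treating widths as fixed between explorations, is an extra hypothesis (essentially constant contexts). Lemma~6 of \cite{chu2011contextual} does not supply it: that lemma bounds $|\Lambda^s_{i,T+1}|$ for arbitrary contexts and says nothing about rounds in which learning is skipped. Under that hypothesis your charging argument is sound, and cleaner than the paper's window count. The stages below $s$ are frozen between $t$ and $t'$, so $t'$ reaches stage $s$ with the same $\hat{A}_s$ and force-explores $i$. Each exploration round then absorbs at most $n-1$ charges. You would still need an additive $n-1$ for est rounds whose witness's turn falls after $T$.

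For the statement as posed, however, no argument closes the gap. It needs an additional assumption on the context sequence or a modified algorithm.
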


\begin{proof}
Let's consider $M$ consecutive rounds for any stage $s$.
Assume selection of the provider is done as per Lines[25-27] for each of the $M$ rounds. 
Note that provider selection in this decision block is done if and only if there exists a provider $j$ such that $j \neq k$ (where $k$ is the designated provider for the round), and $w_{j,t}^s > 2^{-s}$. 
After $M$ consecutive rounds, each provider has received its designated round (as in Line[7]). 

Hence, if for some provider $k$, $w_{k,t}^s > 2^{-s}$, then this provider $k$ should be selected on its designated round. 

Assuming selection in Lines~[25--27] occurs for \(M\) consecutive rounds leads to a contradiction. Thus, at least one provider must be selected in its designated round within these \(M\) rounds. Hence, the provider is selected in Lines~[25--27] for at most \(M-1\) rounds. Until the condition in Line~[14] is satisfied, the claim holds at every stage \(s\).

\end{proof}

\begin{theorem}
\label{thm:regret}
\oursupucb\ has regret $\mathcal{O}\!\left(M^{2} \sqrt{dT \ln T}\right)$
with probability at least $1 - \kappa$, 
if it is run with 
$
\alpha = \sqrt{\tfrac{1}{2} \ln \tfrac{2TM}{\kappa}}.
$
\end{theorem}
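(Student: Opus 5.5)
We follow the SupLinUCB regret analysis of \citet{chu2011contextual} and partition $[T]$ by how $I_t$ was selected. Every round falls into exactly one of three sets: the forced-exploration rounds $\Lambda^s_{T+1}=\bigcup_i \Lambda^s_{i,T+1}$ for some stage $s$ (Lines 10--13), the pure-exploitation rounds $\Lambda_0$ (Lines 14--17), and the within-stage exploitation rounds $\Lambda^s_{\mathrm{est}}$ (Lines 25--27). We first fix $\alpha=\sqrt{\tfrac12\ln\tfrac{2TM}{\kappa}}$ and take a union bound over $t\in[T]$, $i\in[M]$ in the Azuma-type confidence bound behind Lemma~\ref{lem:ucb_vit}. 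This puts us, with probability at least $1-\kappa$, on the event where all three parts of Lemma~\ref{lemma:regret} hold simultaneously. For that union bound to be valid, the rewards indexed by each $\Lambda^s_{i,t}$ must be independent given the contexts (Lemma~4 of \citet{chu2011contextual}). This holds because learning happens only on the bid-independent round-robin schedule (Claim~1 in the proof of Theorem~\ref{thm:mono}).

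On this event, we bound each part separately.

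\emph{Pure exploitation.} For $t\in\Lambda_0$, every active width satisfies $w^s_{i,t}\le 1/\sqrt T$, and $i_t^*\in\hat A_s$ by Part~2. The argmax of the optimistic virtual surplus, together with Part~1, then gives instantaneous regret at most $2w^s_{I_t,t}\le 2/\sqrt T$. Summed over $\Lambda_0$, this contributes at most $2\sqrt T$.

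\emph{Forced exploration.} For $t\in\Lambda^s_{T+1}$, the selected $I_t$ lies in $\hat A_s$, so Part~3 bounds its regret by $2^{3-s}$. The stage-$s$ contribution is therefore at most $2^{3-s}|\Lambda^s_{T+1}|=\sum_i 2^{3-s}|\Lambda^s_{i,T+1}|$. Lemma~6 of \citet{chu2011contextual} gives $|\Lambda^s_{i,T+1}|\le 5\cdot 2^s(1+\alpha^2)\sqrt{d|\Lambda^s_{i,T+1}|}$, hence $2^{-s}|\Lambda^s_{i,T+1}|\lesssim (1+\alpha^2)\sqrt{d|\Lambda^s_{i,T+1}|}$. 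Applying Cauchy--Schwarz over $i$ and $s$, with $\sum_{i,s}|\Lambda^s_{i,T+1}|\le T$ and $S_{\max}=\lceil\ln T\rceil$, yields $O\bigl((1+\alpha^2)\sqrt{M d T \ln T}\bigr)$. We treat $\alpha^2=O(\ln(TM/\kappa))$ as absorbed into the logarithmic factor, as Chu et al.\ do.

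\emph{Within-stage exploitation.} This set is new relative to \citet{chu2011contextual}. For $t\in\Lambda^s_{\mathrm{est}}$, the arm $I_t$ is the argmax over $\hat A_s$, so its regret is again at most $2^{3-s}$ by Part~3. Claim~\ref{claim} gives $|\Lambda^s_{\mathrm{est}}|\le (M-1)|\Lambda^s_{T+1}|$. The stage-$s$ regret here is therefore at most $(M-1)2^{3-s}|\Lambda^s_{T+1}|$, which is an $(M-1)$ multiple of the forced-exploration bound.

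Adding the three parts gives total regret $O\bigl(M\cdot M\sqrt{dT\ln T}\bigr)=O(M^2\sqrt{dT\ln T})$ after loose bookkeeping: one factor of $M$ comes from Claim~\ref{claim}, and the other from summing per-provider terms without the Cauchy--Schwarz saving in $M$. This matches the stated bound.

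The main obstacle is the within-stage exploitation term. The issue is that Part~3 must apply at stage $s$ even though the widths of the active arms are not all $\le 2^{-s}$ there. Part~3 itself only uses membership in $\hat A_s$ and the elimination from stage $s-1$, where widths are $\le 2^{1-s}$. So the bound $2^{3-s}$, with a slightly worse constant if needed, should survive; we would check this first before relying on it.
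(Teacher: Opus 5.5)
Your route is the same as the paper's:
\begin{itemize}
\item the same partition of $[T]$ into $\Lambda_0$, $\Lambda^s_{T+1}$ and $\Lambda^s_{\mathrm{est}}$;
\item the $2/\sqrt{T}$ bound per pure-exploitation round;
\item Part~3 of Lemma~\ref{lemma:regret} together with Lemma~6 of \citet{chu2011contextual} for the forced-exploration rounds;
\item Claim~\ref{claim} to charge the within-stage exploitation rounds to the forced-exploration rounds. Given the Claim, your joint Cauchy--Schwarz over $(i,s)$ would even give $M^{3/2}$.
\end{itemize}
Your check of Part~3 is correct. It needs only $i\in\hat A_s$ and the stage-$(s-1)$ elimination. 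With the paper's doubled threshold $2\cdot 2^{1-s}$, the constant becomes $2^{3-s}+2^{2-s}=12\cdot 2^{-s}$.

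The step that fails is the one you take as given, Claim~\ref{claim}. Its proof says that if a non-designated provider $k$ has $w^s_{k,t}>2^{-s}$ at a within-stage exploitation round $t$, then $k$ will be force-explored at its own designated round. But $w^s_{k,t}=\alpha\sqrt{x_t^\top A_{k,t}^{-1}x_t}$ depends on the current context. At $k$'s designated round $t'$ the context $x_{t'}$ is different, so several things can go wrong:
\begin{itemize}
\item $w^s_{k,t'}$ may be small;
\item $k$ may not be in $\hat A_s$;
\item stage $s$ may never be reached in round $t'$.
\end{itemize}
Nothing forces $k$ to be explored in the direction $x_t$ that blocked stage advancement. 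Also, the ``$M$ consecutive rounds at stage $s$'' in that proof are not consecutive rounds in time.

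The claim, and with it the theorem, is actually false for the arbitrary context sequences covered by the lemmas you import from \citet{chu2011contextual}. Construct the example as follows:
\begin{itemize}
\item take $M=d=2$;
\item set $x_t=e_2$ on rounds designated to provider~1 and $x_t=e_1$ on rounds designated to provider~2;
\item set $\theta_1=e_2$ and $\theta_2=e_1$;
\item use equal virtual costs and $\alpha>3/2$, which holds for the prescribed $\alpha$ once $T/\kappa$ is moderately large.
\end{itemize}
Each provider is only ever force-explored along its own coordinate. After about $4\alpha^2$ pulls each, the designated provider's stage-1 width is at most $1/2$. The other provider's width stays exactly $\alpha$, because $e_1^\top(I+n\,e_2e_2^\top)^{-1}e_1=1$.

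From then on every round is a stage-1 within-stage exploitation round, and no further learning happens. In each such round the argmax picks the uninformed provider, whose UCB is $\alpha$, over the informed one, whose UCB is below $3/2$. This costs regret $1$ per round. So $|\Lambda^1_{\mathrm{est}}|\approx T$ while $|\Lambda^1_{T+1}|=O(\alpha^2)$, and the regret is linear.

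The paper's proof uses Claim~\ref{claim} in exactly the same way, so it has the same hole. A fix must change the exploration rule, for example by letting a non-designated provider with a large width be explored, or by letting those rounds feed learning. Monotonicity would then have to be re-verified.

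Separately, $(1+\alpha^2)\sqrt{\ln T}$ is not $O(\sqrt{\ln T})$. Chu et al.\ do not absorb this factor; they keep it as $\ln^{3}$ under the square root.
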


\begin{proof}

As $2^{-S} \le 1 / \sqrt{T}$, we have
\[
    \{1, \ldots, T\} = \Lambda_0 \cup \bigcup_s \Lambda_{T+1}^s \cup \Lambda_{\mathrm{est}}^s.
\]

\medskip
\noindent
Now, the total regret can be written as
\[
\mathbb{R}_T(\mathcal{ALG}) = \sum_{t=1}^{T} \left[ \left(\theta_{i_t^*}^\top x_t - \psi \left(b_{i_t^*}\right) \right) - \left(\theta_{I_t}^\top x_t - \psi \left(b_{I_t}\right) \right) \right],
\]

Rearranging the summation by partitioning into $\Lambda_0$, $\Lambda_{T+1}^s$, and $\Lambda_{\mathrm{est}}^s$, we get:
\begin{align*}
\mathbb{R}_T (\mathcal{ALG})
&= \sum_{t \in \Lambda_0} 
    \left[ \left(\theta_{i_t^*}^\top x_t - \psi \left(b_{i_t^*}\right) \right) - \left(\theta_{I_t}^\top x_t - \psi \left(b_{I_t}\right) \right) \right] \\
&\quad + 
    \sum_{s=1}^S 
    \left[
        \sum_{t \in \Lambda_{T+1}^s}
            \left[ \left(\theta_{i_t^*}^\top x_t - \psi \left(b_{i_t^*}\right) \right) - \left(\theta_{I_t}^\top x_t - \psi \left(b_{I_t}\right) \right) \right]
    \right] \\
&\quad +
    \sum_{t \in \Lambda_{\mathrm{est}}^s}
        \left[ \left(\theta_{i_t^*}^\top x_t - \psi \left(b_{i_t^*}\right) \right) - \left(\theta_{I_t}^\top x_t - \psi \left(b_{I_t}\right) \right) \right].
\end{align*}

Using the Claim and Lemma results, we get
\begin{align*}
\mathbb{R}_T 
&\le \frac{2}{\sqrt{T}} |\Lambda_0|
    + \sum_{s=1}^S M \sum_{t \in \Lambda_{T+1}^s}
        \left[ \left(\theta_{i_t^*}^\top x_t - \psi \left(b_{i_t^*}\right) \right) - \left(\theta_{I_t}^\top x_t - \psi \left(b_{I_t}\right) \right) \right] \\
&\le \frac{2}{\sqrt{T}} |\Lambda_0| 
    + M \sum_{s=1}^S 
        \sum_{i=1}^{M} 
            8 \cdot 2^{-s} \cdot |\Lambda_{i,T+1}^s| \\
&\le \frac{2}{\sqrt{T}} |\Lambda_0|
    + M \sum_{i=1}^{M} 
        \sum_{s=1}^S 
            40 \cdot (1 + \ln(2TM/\kappa)) 
            \cdot \sqrt{d \, |\Lambda_{i,T+1}^s|} \\
&\le \frac{2}{\sqrt{T}} |\Lambda_0|
    + M^2 \cdot 40 \cdot (1 + \ln(2TM/\kappa)) \cdot \sqrt{STd}.
\end{align*}

Hence,
\[
\mathbb{R}_T \le 
2 \sqrt{T} + 40 M^2 (1 + \ln(2TM/\kappa)) \cdot \sqrt{STd}.
\]
\end{proof}

% MERGE THE POOF AND PROOF SKETCH 

% \begin{proofsketch}
% The proof follows the structure of Theorem~6 in~\citet{auer2003confidence}, with additional terms introduced to account for differences. Primarily, as our algorithm introduces two new components: round-robin exploration and within-stage exploitation—to ensure monotone MAB allocation, {we need to show that with high probablity, active sets $\hat{A}_s$s cannot eliminate an optimal provider (proved in Lemma~\ref{lemma:regret}).  We  need to bound that even if $\hat{A}_s$s contain more agents than that in SupLinUCB at any round, (a) the number of times bad sub-optimal providers are used is bounded by constant with high probability. (b) the sub-optimal arms in active sets that may be pulled more frequently, but the regret incurred due to such pulls is bounded.}
% Claim~\ref{claim} provides an upper bound on how many times a bad sub-optimal provider is used during the within-stage exploitation phase.  Lemma~\ref{lemma:regret} provides reward guarantees of a sub-optimal arm in the active set. 
% Finally, the cumulative regret is obtained by summing the instantaneous regret across all selection rounds, and applying {Lemma~\ref{lemma:regret}} together with Claim~\ref{claim}  completes the bound. 
% \end{proofsketch}

From Theorems~\ref{thm:opt_auc},~\ref{thm:truthful_epic_epir},~\ref{thm:mono}, and~\ref{thm:regret}, the reward structure used for arms in \oursupucb\ and the definition of \ourmabalgo, we conclude,
\begin{corollary}
    \ourmabalgo, is EPIC, EPIR, optimal reverse auction that learns $\theta_i$s with regret guarantee of $O(\sqrt{T})$.
\end{corollary}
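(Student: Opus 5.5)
The corollary follows by chaining the four earlier results, checking that their hypotheses match at each junction. The mechanism \ourmabalgo\ is by definition \textsc{Rev-GTM}(\oursupucb, $\mu$, $\mathbf{RP}$).

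\textbf{Truthfulness and individual rationality.} Theorem~\ref{thm:mono} gives that the allocation rule of \oursupucb\ is ex-post monotone in every provider's bid, for every context sequence and reward realization. This is exactly the hypothesis on $\mathcal{ALG}$ in Theorem~\ref{thm:truthful_epic_epir}. Part~1 of that theorem then yields EPIC and EPIR for \ourmabalgo. The one point I would make explicit is that monotonicity must hold round by round, conditional on $h_{t-1}$. Claim~1 in the proof of Theorem~\ref{thm:mono} shows that learning is bid-independent, so the per-round allocation is a monotone threshold rule in $\widetilde b_i$. This is what lets the round-wise payment of \textsc{Rev-GTM} realize the Myerson integral for each round.

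\textbf{Optimality.} Each selection step of \oursupucb\ maximizes an estimate of the virtual surplus $v_{i,t}-\Psi_i(b_i)$. When $\theta_i$ is known, Theorem~\ref{thm:opt_auc} says that maximizing exactly this quantity, with threshold payments, is the user-optimal reverse auction. So the benchmark in $\mathbb{R}_T$ is the per-round optimal reverse auction. Sublinear regret against that benchmark is the sense in which the mechanism is asymptotically optimal.

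\textbf{Regret.} Theorem~\ref{thm:regret} bounds the regret of \oursupucb\ run on the resampled bids by $O(M^2\sqrt{dT\ln T})$ with probability at least $1-\kappa$. By part~2 of Theorem~\ref{thm:truthful_epic_epir}, \ourmabalgo\ agrees with \oursupucb\ on the true bids except with probability at most $M\mu$ per round. Rewards and virtual surpluses are bounded, so the extra expected regret is at most $O(M\mu T)$. Choosing $\mu = O(1/\sqrt{T})$ keeps the total at $O(\sqrt{T})$ up to $M$, $d$ and logarithmic factors.

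\textbf{Main obstacle.} The difficult step is this choice of $\mu$, because of how it interacts with the payments. By part~3 of Theorem~\ref{thm:truthful_epic_epir}, the transfers scale like $1/\mu$. I would note that this affects the variance of the user's payments but not their expectation, since the expected payment equals the Myerson payment. Therefore, for the regret notion defined on virtual surplus, the $O(\sqrt{T})$ claim goes through.
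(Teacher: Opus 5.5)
Your proposal is correct and follows the paper's own argument, which simply chains Theorem~\ref{thm:mono} into Theorem~\ref{thm:truthful_epic_epir} for EPIC/EPIR, uses the virtual-surplus reward $v_{i,t}-\Psi_i(b_i)$ together with Theorem~\ref{thm:opt_auc} for optimality, and cites Theorem~\ref{thm:regret} for the $O(\sqrt{T})$ bound. Your explicit use of part~2 of Theorem~\ref{thm:truthful_epic_epir} with $\mu = O(1/\sqrt{T})$ and bounded virtual surplus, which carries the bound for \oursupucb\ on the true bids over to \ourmabalgo\ running on resampled bids, fills in a step the paper leaves implicit; one correction is that the resampling is done once for the whole horizon, so the disagreement probability $M\mu$ applies to a single event rather than per round, though your $O(M\mu T)$ bound is unaffected.
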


%%%%%%%%%%%%%%%%%%%%%%%%%%%%%%%%%%%%%%%%%%%%%%%%%%%%%%%%%%%%%%%%%%%%%%%%
\section{Experimental Analysis}
\label{sec:exp_ana}

\paragraph{Simulation environment}
Our experimental study instantiates \ourmabalgo. For each query $q_t, t \in [T]$, we sample a $d=5$ dimensional Gaussian context $x_t \sim \mathcal{N}(\mathbf{0}, \Sigma)$, where $\Sigma$ blends a diagonal scale of $0.2$ with off-diagonal correlations of $0.05$. Four strategic providers submit bids drawn from provider-specific log-normal virtual-value models: for provider $i$, the latent parameters $(\mu_i, \sigma_i)$ depend affinely on $x_t$ through a learned context matrix and a scale factor of $0.15$. 

The Gaussian reward model induces light-tailed, mean-zero perturbations around the linear signal 
$\langle \theta_i, x_t \rangle$, keeping deviations symmetric and controlled at the 
$\sigma_i$ scale. 
To stress our mechanism with asymmetric noise, we also study an exponential sampling scheme that first maps the linear score through 
\[
\lambda_{i,t} = \operatorname{softplus}(\langle \theta_i, x_t \rangle)
\]
to ensure positivity, and then draws 
\[
R_{i,t} \sim \operatorname{Exp}(\lambda_{i,t}).
\]
Here, the reward’s mean is $\lambda_{i,t}^{-1}$ and its variance $\lambda_{i,t}^{-2}$, 
yielding strictly non-negative observations punctuated by rare, high-magnitude outcomes. 
Contrasting regret and revenue under these two noise regimes, therefore, probes how 
\textsc{SupLinUCB-S-OPT} copes with centered, light-tailed fluctuations versus skewed, 
heavy-tailed shocks that better emulate sporadic windfall bids. 

We fix the exploration parameter of \textsc{REV-SupLinUCB-S-OPT} at $\alpha = 0.75$ and regularize the Gram matrices with $\lambda = 1.0$. Unless noted otherwise, we operate at horizon $T = 100K$, and also verify scalability up to $T = 10^6$ courtesy of incremental Sherman--Morrison updates that sustain per-stage Gram matrices in $\mathcal{O}(d^2)$ time per round.

\paragraph{Protocol and metrics}
 To obtain statistically stable estimates, we repeat each configuration over $(N = 40)$ independent random seeds. For run $n$, we generate fresh pseudo-random draws for contexts and bids while keeping the structural parameters fixed. We summarize performance by plotting the averaged cumulative realized regret alongside a scaled $(\sqrt{T})$ reference curve, which highlights the sublinear growth rate and the steady decline of per-round regret. 
 We also measure the realized regret in each round, illustrating its convergence toward zero and the reduced dispersion once forced exploration subsides. 
 Finally, we chart the cumulative actual revenue next to the clairvoyant benchmark, showing how the learned policy progressively narrows the optimality gap while sustaining upside in later rounds.

\paragraph{Implementation details}
The simulator is implemented in Python 3.13 with \texttt{numpy}, \texttt{pandas}, and \texttt{matplotlib}. All experiments execute on a single workstation (Windows 11, 32\,GB RAM); a 40-run sweep at $T = 10K$ completes in roughly 45s thanks to the cached Gram inverses. 
\begin{figure*}[htbp]
  \centering
  \begin{minipage}[t]{0.32\textwidth}
    \centering
    \includegraphics[width=\linewidth]{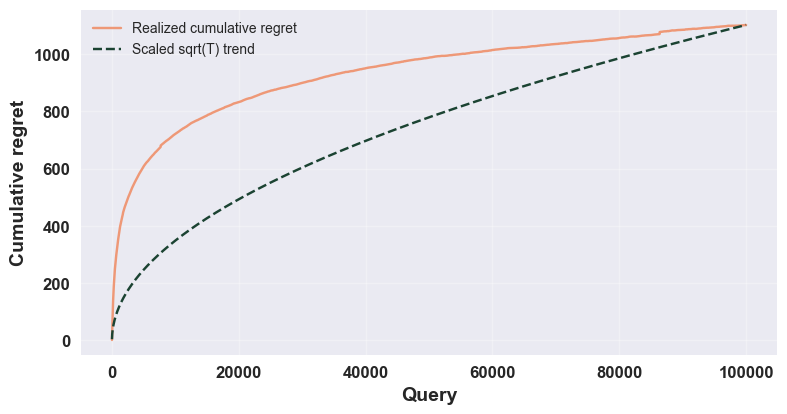}
    \caption*{(a) Averaged cumulative realized regret with scaled $\sqrt{T}$ reference.}
  \end{minipage}\hfill
  \begin{minipage}[t]{0.32\textwidth}
    \centering
    \includegraphics[width=\linewidth]{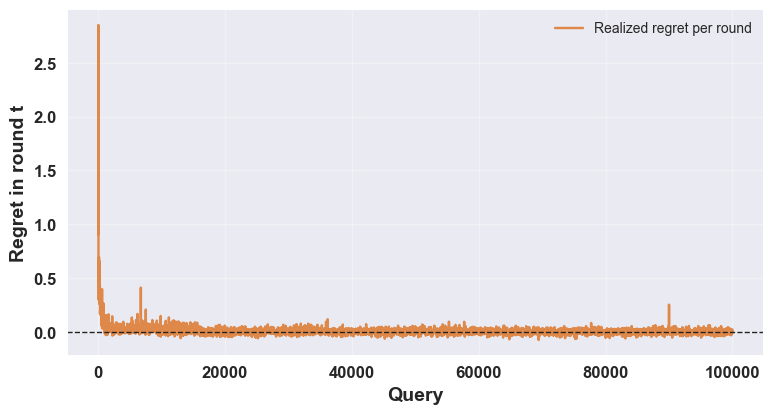}
    \caption*{(b) Realized regret per round showing convergence toward zero.}
  \end{minipage}\hfill
  \begin{minipage}[t]{0.32\textwidth}
    \centering
    \includegraphics[width=\linewidth]{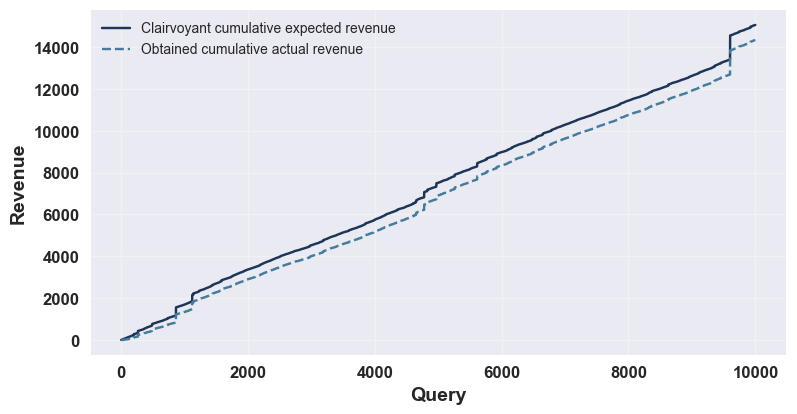}
    \caption*{(c) Cumulative actual revenue vs.\ clairvoyant benchmark.}
  \end{minipage}
  \caption{Regret and revenue curve for \ourmabalgo\ averaged across $N=40$ random seeds.}
  \label{fig:rev_diagnostics}
\end{figure*}

\begin{figure*}[htbp]
  \centering
  \begin{minipage}[t]{0.32\textwidth}
    \centering
    \includegraphics[width=\linewidth]{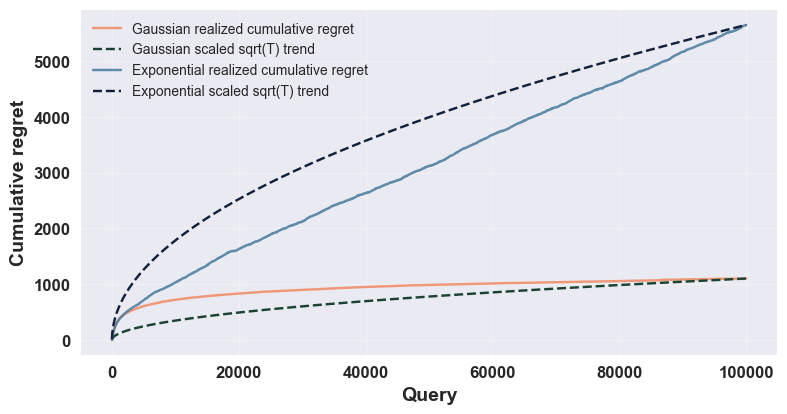}
    \caption*{(a) Averaged cumulative realized regret with scaled $\sqrt{T}$ reference.}
  \end{minipage}\hfill
  \begin{minipage}[t]{0.32\textwidth}
    \centering
    \includegraphics[width=\linewidth]{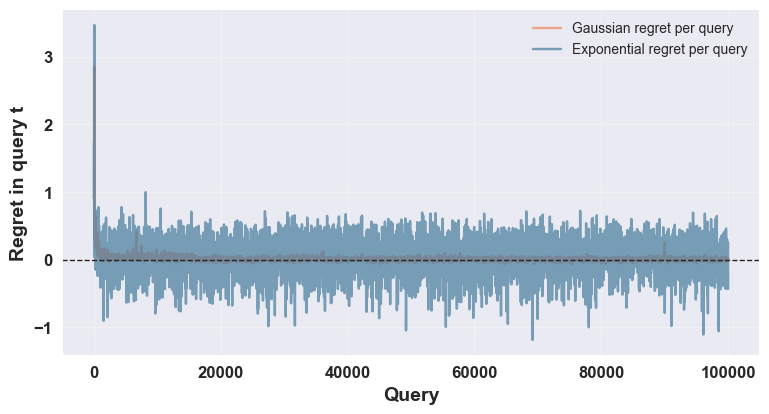}
    \caption*{(b) Realized regret per round showing convergence toward zero.}
  \end{minipage}\hfill
  \begin{minipage}[t]{0.32\textwidth}
    \centering
    \includegraphics[width=\linewidth]{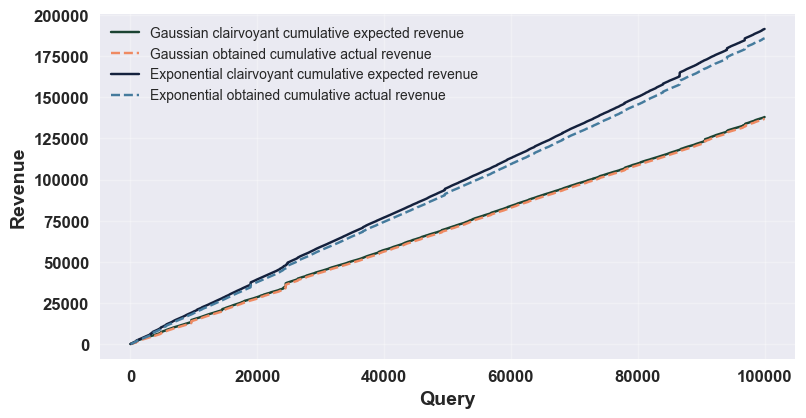}
    \caption*{(c) Cumulative actual revenue vs.\ clairvoyant benchmark.}
  \end{minipage}
  \caption{Regret and revenue curve for \ourmabalgo\ averaged across $N=40$ random seeds.}
  \label{fig:rev_diagnostics_mix}
\end{figure*}

\paragraph{Results}
 We evaluate \ourmabalgo\ over $N = 40$ independent random seeds and summarize the averaged trajectories in Figure~\ref{fig:rev_diagnostics}. In the figure, Panel (a) plots the cumulative realized regret together with a scaled $\sqrt{t}$ guide curve, illustrating the expected sublinear growth and confirming that the algorithm steadily reduces the gap relative to the clairvoyant benchmark. Panel (b) reports the per-round realized regret, which contracts toward zero once the forced-exploration stages conclude, indicating rapid convergence to high-quality selections. Panel (c) compares the cumulative actual revenue with the clairvoyant revenue; the two curves tighten over time, showing that the policy captures most of the attainable surplus while retaining upside during the later exploitation rounds. 

 Figure~\ref{fig:rev_diagnostics_mix} plots the results for Exponential distribution, and we identical trends vis-a-vis the Gaussian distribution.

%%%%%%%%%%%%%%%%%%%%%%%%%%%%%%%%%%%%%%%%%%%%%%%%%%%%%%%%%%%%%%%%%%%%%%%%%%%%%%%%%%%%%%%%%%%%%%%%%%%%%%%%%%%%%%%%%%%%%%%%%%%%%%%%%%%%%%%%%%%%%%%%%%%%%%%%%%%%%%%%%%%%%%%%%%%%%%%%%%%%%%%%%%%%%%%%%%%%%%%%%%%%%%%%%%%%%%%%%%%%%%%%%%%%%%%%%%%%%%%%%%%%%%%%%%%%%%%%%%%%%%%%%%%%%%%%%%%%%%%%%%%%%%%%

%%%%%%%%%%%%%%%%%%%%%%%%%%%%%%%%%%%%%%%%%%%%%%%%%%%%%%%%%%%%%%%%%%%%%%%%
%%%%%%%%%%%%%%%%%%%%%%%%%%%%%%%%%%%%%%%%%%%%%%%%%%%%%%%%%%%%%%%%%%%%%%%%
%%%%%%%%%%%%%%%%%%%%%%%%%%%%%%%%%%%%%%%%%%%%%%%%%%%%%%%%%%%%%%%%%%%%%%%%%%%%%%%%%%%%%%%%%%%%%%%%%%%%%%%%%%%%%%%%%%%%%%%%%%%%%%%%%%%%%%%%%%%%%%%%

%%%%%%%%%%%%%%%%%%%%%%%%%%%%%%%%%%%%%%%%%%%%%%%%%%%%%%%%%%%%%%%%%%%%%%%%

\section{Conclusion}
\label{sec:conclusion}
We introduced the first truthful optimal reverse  contextual MAB mechanism for adaptive LLM model selection, integrating provider-side cost elicitation with user-side learning. Our randomized resampling  procedure, \ourrealgo, extends prior forward-auction frameworks to reverse, cost-minimizing settings while preserving truthfulness and sublinear regret. The resulting \ourmabalgo\ mechanism aligns incentives between users and LLM providers and achieves $O(\sqrt{T})$ regret. More broadly, our work establishes a foundation for mechanism design in multi-model AI ecosystems, where learning and economic incentives must co-evolve.

\paragraph{Future Work}
Future extensions could incorporate minimum thresholds on the costs reported by LLM providers. Another promising direction is to handle uncertainty in query token requirements and user-side budget limits. Incorporating such practical and external constraints would enhance the applicability of our reverse optimal contextual MAB framework to real LLM marketplaces.

%%%%%%%%%%%%%%%%%%%%%%%%%%%%%%%%%%%%%%%%%%%%%%%%%%%%%%%%%%%%%%%%%%%%%%%%

\bibliographystyle{unsrt}

\end{document}